\mathchardef\mhyphen="2D
\newcommand{\R}{\mathbb{R}}
\newcommand{\Z}{\mathbb{Z}}
\newcommand{\cP}{\mathcal{P}}
\newcommand{\cT}{\mathcal{T}}
\newcommand{\cR}{\mathcal{R}}
\newcommand{\cL}{\mathcal{L}}
\newcommand{\fmin}{f_{\min}}
\newcommand{\dynN}{\mathcal{N}}
\newcommand{\N}{\mathcal{N}}
\newcommand{\MCF}{\operatorname{MCF}}
\newcommand{\mincut}{\operatorname{cut}}
\newcommand{\cut}{\operatorname{cut}}
\newcommand{\terminals}{S^+ \cup S^-}
\DeclareRobustCommand{\cev}[1]{%
  {\mathpalette\do@cev{#1}}%
}
\newcommand{\do@cev}[2]{%
  \vbox{\offinterlineskip
    \sbox\z@{$\m@th#1 x$}%
    \ialign{##\cr
      \hidewidth\reflectbox{$\m@th#1\vec{}\mkern4mu$}\hidewidth\cr
      \noalign{\kern-\ht\z@}
      $\m@th#1#2$\cr
    }%
  }%
}
\newtheorem{theorem}{Theorem}
\newtheorem{lemma}[theorem]{Lemma}
\newtheorem{obs}{Observation}
\theoremstyle{definition}
\title{A Faster Algorithm for Quickest Transshipments 
via an Extended Discrete Newton Method}
\date{}
\author[1]{Miriam Schlöter}
\author[2]{Martin Skutella}
\author[2]{Khai Van Tran}
\affil[1]{Department of Mathematics, ETH Zürich, Switzerland, \texttt{miriam.schloeter@ifor.math.ethz.ch}}
\affil[2]{Institute of Mathematics, TU Berlin, Germany,
\texttt{\{skutella,kvtran\}@math.tu-berlin.de}}
\begin{document}
\maketitle
\begin{abstract}
The Quickest Transshipment Problem is to route flow as quickly as possible from sources with supplies to
sinks with demands in a network with capacities and transit times on the arcs. It is of fundamental
importance for numerous applications in areas such as logistics, production, traffic, evacuation, and
finance. More than 25 years ago, Hoppe and Tardos presented the first (strongly) polynomial-time algorithm for this
problem. Their approach, as well as subsequently derived algorithms with strongly polynomial running
time, are hardly practical as they rely on parametric submodular function minimization via Megiddo's
method of parametric search. The main contribution of this paper is a considerably faster algorithm for
the Quickest Transshipment Problem that instead employs a subtle extension of the Discrete Newton Method.
This improves the previously best known running time of $\tilde{O}(m^4k^{14})$ to $\tilde O(m^2k^5+m^3k^3+m^3n)$, where $n$ is the number of nodes, $m$ the number of arcs, and $k$ the number of sources and sinks.
\end{abstract}

\thispagestyle{empty}
\pagebreak
\setcounter{page}{1}

\section{Introduction}

Time is a critical resource in many network routing problems arising in road, pedestrian, rail, or air
traffic control, including evacuation planning as one important example~\cite{HamacherTjandra02a}. Network
flows over time capture the essence of these applications as they model the variation of flow along arcs
over time as well as the delay experienced by flow traveling at a given pace through the network. Other
application areas include, for instance, production systems, communication networks, and financial
flows~\cite{Aronson89,PowellJO95}.

\paragraph{Maximum Flows Over Time.}
The study of flows over time goes back to the work of Ford and Fulkerson~\cite{Ford1958,Ford1962}. Given
a directed graph~$D=(V,A)$ with integral \emph{arc capacities}~$u\in\Z_{\geq 0}^A$ and integral
\emph{arc transit times}~$\tau\in\Z_{\geq 0}^A$, they consider the \emph{Maximum Flow Over Time
Problem}, that is, to send the maximum possible amount of flow from a source node~$s$ to a sink node~$t$
within a given time horizon~$\theta\in\Z_{>0}$. Here, the capacity bounds the \emph{rate} at which flow
may enter an arc, while its transit time specifies how much time it takes for flow to travel from the
tail of an arc to its head.

Ford and Fulkerson observe that this problem can be reduced to a static maximum flow problem in an
exponentially large \emph{time-expanded network}, whose node set consists of~$\theta$ copies of the given
node set~$V$. More importantly, Ford and Fulkerson show that a path decomposition of a static min-cost
$s$-$t$-flow (with arc transit times as costs) in the given network yields a maximum flow over time by
repeatedly sending flow along these $s$-$t$-paths at the corresponding flow rates. In particular, using
the min-cost flow algorithm of Orlin~\cite{Orlin1993} in conjunction with Thorup's shortest path
algorithm~\cite{Thorup04}, the maximum flow value can be computed in~$O(m\log n (m + n \log\log n))$
time, where~$n$ is the number of nodes and~$m$ the number of arcs of the network. In the following we
use~$O(\MCF)$ to refer to this min-cost flow running time.

\paragraph{Quickest Flows.} 
Closely related to the Maximum $s$-$t$-Flow Over Time Problem is the \emph{Quickest Flow Problem}:
Instead of fixing the time horizon, we fix the amount of flow to be sent and ask for an $s$-$t$-flow over
time with minimal time horizon~$\theta^*$. Burkard, Dlaska, and Klinz~\cite{Burkard1993} observe that the
Quickest $s$-$t$-Flow Problem can be solved in strongly polynomial time by incorporating the algorithm of
Ford and Fulkerson in Megiddo’s parametric search framework~\cite{Megiddo1979}. Lin and
Jaillet~\cite{Lin2015} present a cost-scaling algorithm that solves the problem in
$O(nm\log(n^2/m)\log(nC))$ time, where~$C$ is the maximum arc transit time, thus matching the min-cost
flow running time of Goldberg and Tarjan's cost-scaling algorithm~\cite{Goldber1990}. Refining Lin and
Jaillet's approach, Saho and Shigeno~\cite{Saho2017} achieve an algorithm with the currently fastest known
strongly polynomial running time $O(m^2n\log^2 n)$.

\paragraph{Quickest Transshipments.}
The \emph{Quickest Transshipment Problem} generalizes the Quickest Flow Problem to the setting with
multiple source and sink nodes with given supplies and demands, respectively. While for static flows,
the case of multiple sources and sinks can be easily reduced to the special case of a single source-sink
pair, the situation seems considerably more complicated for flows over time. Klinz~\cite{Klinz1994}
observed that the question whether the supplies and demands can be satisfied within a given time
horizon~$\theta$
(such a time horizon is then called \emph{feasible}) boils down to deciding whether a
particular submodular function~$d^{\theta}$ on the subsets of \emph{terminals} (sources and sinks) is non-negative.
This submodular function is induced by the submodular cut function of the corresponding time-expanded
network and can be evaluated via one maximum flow over time computation in~$O(\MCF)$ time.

Hoppe and Tardos~\cite{Hoppe2000} present the first (strongly) polynomial time algorithm for the Quickest
Transshipment Problem. In a first step, their algorithm determines the minimum feasible time
horizon~$\theta^*$ by solving a parametric submodular function minimization problem using Megiddo's
parametric search framework~\cite{Megiddo1979}. In a second step, they find an integral transshipment over time with time
horizon~$\theta^*$ via a reduction to a lex-max flow over time problem, for which they present a strongly
polynomial time algorithm. This reduction again relies on at most~$2k-2$ parametric submodular function
minimizations where~$k$ is the number of terminals. Schlöter and
Skutella~\cite{Schloeter2017,schloter2018flows} come up with an alternative way of computing a (not
necessarily integral) transshipment over time with given time horizon~$\theta^*$ that only requires one
submodular function minimization using Orlin's algorithm~\cite{Orlin2009} (or any another submodular
function minimization algorithm relying on Cunningham's framework~\cite{Cunningham1985}).

\paragraph{Evacuation Problem.}
The special case of the Quickest Transshipment Problem with a single sink is also called \emph{Evacuation Problem}.
For this special case (as well as for the symmetric setting with a single source),
Kamiyama~\cite{kamiyama2019} and Schlöter~\cite{schloter2018flows} independently present a faster
algorithm with running time \mbox{$\tilde{O}(m^2k^4+m^2nk)$}\footnote{We often use the
$\tilde{O}$-notation for running times, omitting all poly-logarithmic terms.} for determining the minimum
feasible time horizon~$\theta^*$ that does not rely on Megiddo's parametric search framework; see Table~\ref{Tbl:Running_times_QTP}.
\begin{table}[t]
\caption[Running]{Running times\footnotemark[\value{footnote}] of strongly polynomial quickest transshipment algorithms for networks with~$n$ nodes, $m$ arcs, 
and~$k$ terminals}
\label{Tbl:Running_times_QTP}
\centering
\small
\renewcommand{\arraystretch}{1.1}
\begin{tabular}{ll}
\hline
\multicolumn{2}{l}{\em Quickest Flow Problem (single source and single sink)}\\ 
\quad Saho \& Shigeno~\cite{Saho2017} & $\tilde O(m^2n)$\\
\hline
\multicolumn{2}{l}{\em Evacuation Problem (single source or single sink)}\\ 
\quad Kamiyama~\cite{kamiyama2019}, Schlöter~\cite{schloter2018flows} & $\tilde O(m^2k^5+m^2nk)$\\
\hline
\multicolumn{2}{l}{\em Quickest Transshipment Problem (multiple sources and sinks)}\\ 
\quad Hoppe \& Tardos \cite{Hoppe2000}& $\tilde O(m^4k^{15})$\\
\quad Schlöter \& Skutella \cite{Schloeter2017,schloter2018flows}\qquad${}$ & $\tilde O(m^4k^{14})$\\
\quad 
this paper & $\tilde O(m^2k^5+m^3k^3+m^3n)$\\
\hline
\end{tabular}
\end{table}
We describe this
algorithm in Section~\ref{sec:evac}. The corresponding transshipment can then be computed by an
additional submodular function minimization using Orlin's algorithm~\cite{Orlin2009} (or any another
submodular function minimization algorithm relying on Cunningham's framework), resulting in an overall
running time of $\tilde{O}(m^2k^5+m^2nk)$.

\paragraph{Our Contribution.}
The running times of all previous strongly polynomial algorithms~\cite{Hoppe2000,Schloeter2017,schloter2018flows} for the Quickest Transshipment
Problem are heavily dominated by the running time
required to solve parametric submodular function minimization problems. For this purpose, Megiddo's
parametric search framework~\cite{Megiddo1979} requires a fully combinatorial submodular function
minimization algorithm, that is, an algorithm whose only elementary operations are additions, scalar
multiplications, and comparisons.
The fastest such algorithm known is due to Iwata and Orlin~\cite{iwata2009simple} and needs $\tilde{O}(m^2k^7)$ time to minimize the submodular function~$d^\theta$. Pairing parametric search with this algorithm to
determine the minimum feasible time horizon~$\theta^*$ requires
$\tilde{O}(m^4k^{14})$ time, since parametric search causes a squaring of the base running time.
After computing the minimum feasible time horizon~$\theta^*$, Hoppe and Tardos~\cite{Hoppe2000}
require~$2k-2$ further parametric submodular function minimizations to determine an integral
transshipment over time, while Schlöter and Skutella~\cite{Schloeter2017,schloter2018flows} determine a
(fractional) transshipment over time via only one submodular function minimization.

We present a novel strongly polynomial algorithm for the Quickest Transshipment Problem, that does not rely
on parametric search. Instead, we develop a subtle extension of the Discrete Newton Method (or
Dinkelbach's Algorithm~\cite{Dinkelbach1967}) for computing the minimum feasible time horizon~$\theta^*$. In
terms of its running time, our algorithm beats all previous algorithms for the Quickest Transshipment
Problem by several orders of magnitude. It is close to the best known running times for the considerably
simpler Quickest Flow and Evacuation Problem; see Table~\ref{Tbl:Running_times_QTP}.

\begin{theorem}\label{Thm:main}
For the Quickest Transshipment Problem, the minimum feasible time horizon~$\theta^*$ and a transshipment
over time with time horizon~$\theta^*$ can be computed in $\tilde{O}(m^2k^5+m^3k^3+m^3n)$ time.
\end{theorem}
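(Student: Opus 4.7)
The plan is to split the proof into two stages, mirroring the high-level strategy of Schlöter and Skutella but replacing the parametric search in the first stage by an extension of the Discrete Newton Method.

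\emph{Stage 1: Computing $\theta^*$.} The feasibility of a time horizon $\theta$ is equivalent to $\min_S d^\theta(S)\geq 0$, and for each fixed subset $S$ of terminals the map $\theta\mapsto d^\theta(S)$ is a piecewise linear concave function of $\theta$ (up to an additive constant equal to the net demand of $S$, it is the negation of a maximum flow over time value with horizon $\theta$ in the network induced by $S$). I would apply a Discrete Newton iteration to the outer parametric problem $\min_S d^\theta(S)=0$: starting from an easily computable upper bound $\theta_0$, at iteration $i$ find a minimizer $S_i$ of $d^{\theta_i}(\cdot)$ via one submodular function minimization, and then set $\theta_{i+1}$ to be the root of the affine extension of $\theta\mapsto d^\theta(S_i)$ at $\theta_i$. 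Concavity in $\theta$ ensures that $(\theta_i)$ is monotone and converges to $\theta^*$; the ``extension'' beyond the classical scalar Newton step lies in parameterizing the step by a set $S_i$ chosen inside the submodular polyhedron associated with $d^{\theta_i}$.

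The technical heart of the proof is bounding the number of Newton iterations. I would adapt Radzik-type combinatorial amortization to the submodular setting, exploiting two features: the slope of $d^\theta(S_i)$ at $\theta_i$ equals the static out-capacity of the cut $\delta^+(S_i)$ in the underlying network, so only polynomially many distinct slope values can occur; and consecutive minimizers $S_i$ differ in a controlled way governed by the submodular polyhedron, so a suitably chosen combinatorial potential (for instance based on a lex-max flow refinement, or on a chain structure in the polyhedron) decreases geometrically. Aiming for roughly $\tilde O(k^3)$ Newton iterations, each costing one SFM of $d^{\theta_i}$ running in $\tilde O(m^2 k^2)$ time together with an $O(\MCF)$ evaluation oracle, accounts for the first two summands $\tilde O(m^2 k^5 + m^3 k^3)$ of the claimed running time.

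\emph{Stage 2: Constructing the transshipment.} Once $\theta^*$ is available, I would invoke the Schlöter--Skutella reduction, which produces a transshipment over time with horizon $\theta^*$ from one additional submodular function minimization followed by a path decomposition of a static min-cost flow in the underlying network. A careful implementation, accounting for the $O(m)$ time-breakpoints of the maximum flow over time and the need to route on $O(n)$ intermediate layers, contributes the remaining $\tilde O(m^3 n)$ term.

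The main obstacle will be the iteration bound in Stage 1: while discrete Newton (Dinkelbach) iteration is classical for parametric flow problems with a scalar parameter, here the parameter is set-valued and ranges over a submodular polyhedron that itself depends on $\theta$. I expect the structural argument — showing that each Newton step either strictly reduces a combinatorial potential or already brings $\theta_{i+1}$ within a constant factor of $\theta^*$ — to form the bulk of the technical content, and I anticipate borrowing intuition from the Evacuation Problem analysis of Kamiyama and Schlöter to handle the multi-source/multi-sink setting.
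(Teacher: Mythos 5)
There is a genuine gap, and it starts with a sign error that hides the central difficulty. You assert that $\theta\mapsto d^\theta(S)$ is concave, ``the negation of a maximum flow over time value.'' In fact $d^\theta(S)=o^\theta(S)-b(S)$, so it is $o^\theta(S)$ \emph{plus} a constant: piecewise linear, non-decreasing, and \emph{convex}. Consequently the lower envelope $d(\theta)=\min_S d^\theta(S)$ is a pointwise minimum of convex functions and is in general \emph{neither convex nor concave}. This is exactly why the naive Discrete Newton iteration you describe (compute a minimizer $S_i$, jump to the zero of $\theta\mapsto d^\theta(S_i)$) has no obvious polynomial iteration bound; it is precisely Algorithm~\ref{alg_evac} from Section~\ref{sec:evac}, for which only an exponential bound $2^k$ is known in the multi-source/multi-sink case (the strong-map chain argument that rescues the Evacuation Problem breaks down when $|S^+|>1$ and $|S^-|>1$). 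Your proposal has no mechanism that replaces it.

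The actual algorithmic novelty of the paper is the \emph{long-jump} step: after computing the Newton zero $\theta_i'$, the algorithm tries to overshoot to $\theta_i'+j\cdot\frac{-d(\theta_i')}{\cut^{\theta_i'}(S_i)}$ for $j\in\{1,2,4,\dots,2^{\lceil\log_2(k^2/4)\rceil}\}$, keeping the largest such $\theta$ that is still infeasible. The iteration bound then comes from a three-way classification: iterations taking the maximal jump eliminate a source--sink pair forever ($|I^1|\leq k^2/4$); iterations that cross a breakpoint of some $\theta\mapsto d^\theta(S)$ are bounded via Observation~\ref{Obs:Structure_of_Breakpoints} and Goemans/Radzik's $\{-1,0,1\}$-combination lemma ($|I^2|\in O(m\log m)$); and the remaining ``locally linear'' iterations are bounded by building a chain of ring families $\cR(\cT_1)\subsetneq\cR(\cT_2)\subsetneq\cdots$ from the minimizers $S_i$ and invoking the $O(k^2)$ chain-length bound of Goemans, Gupta, and Jaillet ($|I^3|\in O(k^2\log k+m\log m\log k)$). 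Your proposal gestures at ``Radzik-type amortization'' and ``chain structure in the polyhedron'' but supplies neither the jump mechanism that makes these bounds applicable nor the ring-family machinery. Finally, your cost accounting does not close: $\tilde O(k^3)$ iterations times $\tilde O(m^2k^2)$ per SFM gives only $\tilde O(m^2k^5)$, not the additional $m^3k^3$ term; and the $\tilde O(m^3n)$ term in the theorem comes from quickest $s$-$t$-flow computations (Saho--Shigeno) inside the $I^1\cup I^2$ iterations of Stage~1, not from the Stage~2 transshipment construction, which costs only a single SFM in $\tilde O(m^2k^5)$.
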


The Discrete Newton Method has proved useful in deriving efficient algorithms for a number of problems in
combinatorial optimization, including, for example, the Line Search Problem in a submodular
polytope~\cite{Goemans2017}, the Parametric Global Minimum Cut Problem~\cite{AissiMcCormickQueyranne2020},
Budgeted Network Problems~\cite{McCormickOrioloPeis2014}, and the Evacuation
Problem~\cite{kamiyama2019,schloter2018flows}.
In our context, computing the minimum feasible time horizon~$\theta^*$ boils down to finding the smallest
zero of the parametric submodular function~$\theta\mapsto\min_Sd^{\theta}(S)$ where the minimum is taken
over all subsets~$S$ of terminals. Unfortunately, this function, as the pointwise minimum of piecewise
linear and convex functions, is generally neither convex nor concave (see Figure~\ref{fig:lower_envelope}
below).

For the special case of the Evacuation Problem, Kamiyama~\cite{kamiyama2019} and
Schlöter~\cite{schloter2018flows} find~$\theta^*$ by successively computing zeros of
functions~$\theta\mapsto d^{\theta}(S)$ for particular terminal subsets~$S$. This basic approach
can also be applied to the more general Quickest Transshipment Problem, as we discuss in
Section~\ref{sec:evac}. We can, however, only prove an exponential bound on its running time. In
Section~\ref{sec:main} we therefore present a refined algorithm that reaches~$\theta^*$ in strongly
polynomial time. The novel algorithmic idea is to allow for larger jumps
towards~$\theta^*$. These jumps are carefully chosen such that their number is bounded in terms of
the Quickest Transshipment Problem's combinatorial structure. The analysis of our algorithm builds on the
work of Radzik~\cite{Radzik1998} as well as on the more recent results on ring families by Goemans,
Gupta, and Jaillet~\cite{Goemans2017}.

%
%

\section{Preliminaries.}\label{secIntro}
%
%
A \emph{flow over time network} $\N = (D = (V,A), u, \tau, S^+, S^-)$ consists of a directed graph $D =
(V,A)$ together with integral \emph{arc capacities} $u\in\Z_{\geq 0}^A$, integral \emph{arc transit
times} $\tau\in\Z_{\geq 0}^A$, and disjoint sets of sources $S^+ \subseteq V$ and sinks $S^- \subseteq
V$.
Throughout this paper we set $n:=|V|$, $m := |A|$, and $k := |\terminals|$.
\emph{Supplies} and \emph{demands} at the sources and sinks, respectively, are given by a
function $b:\terminals\!\!\rightarrow\Z$ with~$b(s) \geq 0$ for~$s \in S^+$, $b(t) \leq 0$
for~$t\in S^-$, and $\sum_{v\in\terminals} b(v) = 0$.
A flow over time in~$\N$ with time horizon~$\theta$ specifies for each arc~$a\in A$ and each point in
time $\theta'\in[0,\theta]$ the rate at which flow enters arc~$a$ at time~$\theta'$. 
At each point in time the flow value on an arc $a \in A$ is upper bounded by its capacity $u_a$.
Additionally, a flow over time has to respect flow conservation.
In the following, we discuss some important properties of flow over time problems considered in this paper. For a thorough
introduction to flows over time we refer to the survey~\cite{Skutella-Intro09}.

Given a flow over time network~$\N$, a supply/demand-function~$b$, and a time horizon~$\theta\geq 0$, the
\emph{Transshipment Over Time Problem} $(\N,b,\theta)$ asks for a flow
over time with time horizon~$\theta$ in~$\N$ that satisfies all supplies and demands.
If such a flow over time exists, we say that~$(\N,b,\theta)$ is \emph{feasible} and that~$\theta$ is a
\emph{feasible time horizon} for~$(\N,b)$.
The \emph{Quickest Transshipment Problem}~$(\N,b)$ is to find a transshipment over time with minimum
feasible time horizon~$\theta^*$.

The feasibility of~$(\N,b,\theta)$ can be characterized via the minimum of a particular submodular
function on the set of terminals~$\terminals$ which we introduce next.
For a subset of terminals $S\subseteq\terminals$ and~$\theta\geq 0$, let~$o^{\theta}(S)$ be the maximum
amount of flow that can be sent from the sources in~$S^+\cap S$ to the sinks in~$S^-\setminus S$ within
time horizon~$\theta$. Klinz~\cite{Klinz1994} observed that~$\theta$ is a feasible time horizon
for~$(\N,b)$ if and only if
\begin{align}
o^{\theta}(S)~\geq~b(S)~\coloneqq~\sum_{v\in S}b(v)\qquad\text{for all~$S\subseteq\terminals$.}
\label{eq:feasibility}
\end{align}
We set~$d^\theta(S)\coloneqq o^\theta(S)-b(S)$ for~$S\subseteq\terminals$. Condition~\eqref{eq:feasibility} is equivalent to~$d^\theta(S)\geq0$ for all~$S\subseteq\terminals$.
The functions~$S\mapsto o^{\theta}(S)$ and~$S\mapsto d^{\theta}(S)$ are
submodular~\cite{Hoppe2000}. In particular, the feasibility of time horizon~$\theta$ can be checked via one submodular function minimization, since it is equivalent to
the non-negativity of~$d^\theta$.

For~$\theta\geq0$ and~$S\subseteq\terminals$, the value~$o^{\theta}(S)$ and thus~$d^\theta(S)$ can be
computed as follows. Consider the extended network $\N^S$ that is obtained by adding a super source~$s$
and a super sink~$t$ to~$\N$ together with arcs of infinite capacity and zero transit time connecting~$s$
to each~$v\in S^+\cap S$ and connecting each~$v\in S^-\setminus S$ to~$t$. By construction, $o^\theta(S)$
is equal to the value of a maximum $s$-$t$-flow over time in~$\N^S$ with time horizon~$\theta$.
Thus,~$o^\theta(S)$ can be computed via one static min-cost flow computation using the algorithm of
Ford and Fulkerson~\cite{Ford1958,Ford1962}: Create a network~$\N^{S,\theta}$ from~$\N^S$ by
connecting~$t$ to~$s$ with an additional arc~$a'$ of infinite capacity and negative `transit
time'~$-\theta$. Then,~$-o^\theta(S)$ is equal to the minimum cost of a (static) circulation
in~$\N^{S,\theta}$, where arc costs are equal to transit times.

For each~$S\subseteq\terminals$, the function~$\theta\mapsto o^{\theta}(S)$ thus corresponds to a
parametric min-cost flow problem in the network~$\N^{S,\theta}$. In particular, it is piecewise linear,
non-decreasing, and convex. The same properties therefore hold for the related function~$\theta\mapsto
d^\theta(S)=o^\theta(S)-b(S)$.

\begin{obs}\label{Obs:Structure_of_Breakpoints}
Let~$S\subseteq\terminals$ and let~$\theta'$ be a breakpoint of the function~$\theta\mapsto
d^{\theta}(S)$. Then there is a vector~$\lambda\in\{-1,0,1\}^A$ with $\theta'=\sum_{a \in
A}\lambda_a\tau_a$.
\end{obs}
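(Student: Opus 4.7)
The plan is to exploit the static-circulation characterization of $o^\theta(S)$ given in the excerpt: $-o^\theta(S)$ equals the minimum cost of a circulation in the extended network $\mathcal{N}^{S,\theta}$, where every original arc $a\in A$ carries cost $\tau_a$ while the added feedback arc $a'$ carries cost $-\theta$. Hence $\theta\mapsto -o^\theta(S)$ is a parametric min-cost circulation problem in which only the cost of $a'$ varies, and linearly so; its breakpoints coincide with those of $\theta\mapsto d^\theta(S)=o^\theta(S)-b(S)$.

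Next I invoke standard network-simplex theory. Between two consecutive breakpoints a single spanning-tree basis of $\mathcal{N}^{S,\theta}$ is optimal, and at a breakpoint $\theta'$ two adjacent bases are simultaneously optimal. Two adjacent bases differ by one pivot, which determines a unique undirected cycle $C$ in $\mathcal{N}^{S,\theta}$ (the entering arc together with its fundamental tree path). Optimality of both bases at $\theta'$ is equivalent to the signed sum of arc costs around $C$, taken with sign $+1$ on arcs traversed forward and $-1$ on arcs traversed backward, being zero. Since $\theta'$ is a genuine breakpoint, this sum must actually depend on $\theta$, so $a'\in C$; otherwise the pivot would be either strictly improving or strictly worsening for every $\theta$, and $\theta'$ would not be a breakpoint.

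Splitting the zero-sum identity into the contribution of $a'$ and that of the arcs in $C\cap A$, define $\lambda\in\{-1,0,1\}^A$ by placing $\pm1$ on each arc of $C\cap A$ according to its orientation around $C$ and $0$ elsewhere. The identity then reads
\[
\sum_{a\in A}\lambda_a\tau_a\;+\;\varepsilon\,(-\theta')\;=\;0,
\]
where $\varepsilon\in\{-1,+1\}$ encodes the orientation of $a'$ within $C$. Rearranging gives $\theta'=\varepsilon\sum_{a\in A}\lambda_a\tau_a$, and replacing $\lambda$ by $\varepsilon\lambda\in\{-1,0,1\}^A$ if needed yields the claimed representation.

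The main obstacle I anticipate is justifying cleanly that the pivot cycle $C$ always contains $a'$, according to whether $a'$ is itself basic or non-basic at $\theta'$. When $a'$ is non-basic, the breakpoint is precisely where the reduced cost of $a'$ crosses zero, and $a'\in C$ is immediate. When $a'$ is basic, one must argue that any non-basic arc $a^\ast\in A$ whose reduced cost depends on $\theta$ does so only because its fundamental cycle with the current tree passes through $a'$; this follows from the tree-potential formula for reduced costs in the network simplex, but writing it out carefully is the step that will require most attention.
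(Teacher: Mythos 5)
Your proof is correct, but it takes a different route from the paper's. The paper argues via the \emph{minimum cycle cancelling algorithm}: start from the zero circulation (optimal at $\theta=0$), and as $\theta$ increases, track when a cycle in the residual network of the current optimal circulation first acquires cost zero; such a cycle must contain $a'$ (every other arc's residual cost is $\pm\tau_a$, independent of $\theta$), and reading off the cost-zero condition yields $\theta'=\sum_{a\in C\setminus\{a'\}}\pm\tau_a$ directly, with signs $\pm1$ corresponding to forward or backward residual arcs. Your proof instead uses the \emph{network simplex / basis} machinery: breakpoints are the values $\theta'$ at which the reduced cost of some entering arc crosses zero, the reduced cost is the signed cycle sum around the fundamental cycle $C$, and the slope in $\theta$ is nonzero precisely when $a'\in C$. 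Both proofs reach the same structural fact — a breakpoint equals a signed sum of transit times around a cycle through $a'$ — but the paper's residual-network view avoids the basic/non-basic case distinction for $a'$ and the tree-potential bookkeeping that you correctly flag as the delicate part of your argument. Your approach is somewhat heavier but equally standard and self-contained once that detail (that a non-basic arc's reduced cost depends on $\theta$ iff its fundamental cycle crosses $a'$) is written out.
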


\begin{proof}
The parametric minimum-cost flow problem in network~$\N^{S,\theta}$ can be solved by the minimum cycle
cancelling algorithm: Start with the zero-flow, which is an optimum solution for~$\theta=0$.
Due to arc~$a'$ of cost~$-\theta$, new negative cycles emerge in the residual network when~$\theta$ is
increased. Each such cycle~$C$ must contain~$a'$; augmenting flow along~$C$ induces a breakpoint at
time~$\theta'=\sum_{a\in C\setminus\{a'\}}\tau_a$. As~$C$ lives in a residual network, it may contain
backward arcs~$\cev{a}$ of cost~$-\tau_a$, where~$a\in A$ is the corresponding forward arc.
\end{proof}

Notice that the left-hand derivative of the function~$\theta\mapsto d^{\theta}(S)$ at breakpoint~$\theta'$ is equal to the flow value on arc~$a'$ in a min-cost circulation~$x^{\theta'}$ in~$\N^{S,\theta'}$.
This flow value is equal to the minimum capacity of an
$s$-$t$-cut in the subnetwork of~$\N^{S}$ induced by the arcs in the support of this min-cost circulation.
%
%
%
We define $\mincut^{\theta'}(S) := x^{\theta'}_{a'}$.
Note that $\mincut^{\theta'}(S)$ is the left-hand derivative of $\theta \mapsto d^\theta(S)$ at time $\theta'$.

We define the function~$d:\R_{\geq0}\to\R$ by~$d(\theta)\coloneqq\min_{S\subseteq\terminals}d^\theta(S)$,
i.e., $d$ is the lower envelope of functions~$\theta\mapsto d^\theta(S)$, $S\subseteq\terminals$, and
therefore piecewise linear and non-decreasing; see Figure~\ref{fig:lower_envelope} for an illustration.
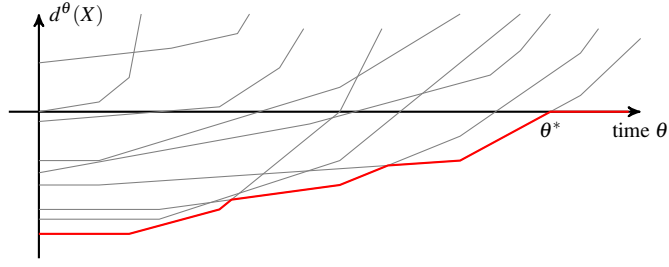
\begin{figure}[t]
	\begin{center}
		\begin{tikzpicture}[xscale = 0.8, yscale=0.65]
    \draw[->, >=stealth', thick] (-0.5,0)  -- (10,0) node(xline)[below]
        {\scriptsize time $\theta$};
    \draw[->, >=stealth', thick] (0, -3) -- (0,2) node(yline)[right] {$\scriptstyle d^\theta(X)$};
    \draw[thin, color = gray] (0,-1) -- (1,-1) -- (5,0.5)--(7,2);
    \draw[thin, color = gray] (0,-2.2)--(2,-2.2)--(5,-1)--(8,2);

    \draw[thin, color = gray] (0,-2.5) -- (1.5,-2.5)--(3,-2)--(4,-1)--(5,0)--(5.7,1.7);
    \draw[thin, color=gray] (0,-2)--(2,-2)--(5,-1.5)--(7,-0.5)--(9,1.2)--(9.3,1.7);
    \draw[thin, color=gray] (0,-1.5)--(1,-1.5)--(7,-1)--(9,1/3)--(10,1.5);
    
    \draw[thin, color=gray] (0,1)--(2.2,1.3)--(3.3,1.6)--(3.5,2);

    \draw[thin, color=gray] (0,0)--++(1,0.2)--++(0.5,0.5)--(1.7,2);

    \draw[thin, color=gray] (0,-1.25)--++(4.5,1)--++(3,1)--++(0.5,0.5)--(8.5,2);

    \draw[thin, color=gray] (0,-0.2)--++(3,0.3)--++(1,0.8)-- (4.4,1.7);

    \node[below] at (8.5,0) {$\scriptstyle\theta^*$};

    \coordinate (p1) at (3.2 ,-1.8);
    \coordinate (p2) at (5.8 ,-1.1);
    \draw[thick, color = red] (0,-2.5) -- (1.5,-2.5)--(3,-2)--(p1)--(5,-1.5)--(p2)--(7,-1)--(8.5,0)--(9.8,0);

\end{tikzpicture}
	\end{center}
	\caption{Each gray line corresponds to a subset of terminals~$S\subseteq\terminals$ and depicts the piecewise linear, increasing, and convex function $\theta \mapsto d^\theta(S)$;
	the lower envelope $\theta \mapsto d(\theta)$ is depicted in red.}
	\label{fig:lower_envelope}
\end{figure}
The minimum feasible time horizon is thus~$\theta^*=\min\{\theta\geq0\mid d(\theta)\geq0\}$. 

The flow over time model described above is also called the \emph{continuous time model} because it uses
a continuous time interval $[0,\infty)$.
When Ford and Fulkerson introduced flows over time in the 1950s~\cite{Ford1958,Ford1962}, they considered
a discrete time model.
In this model, discrete time steps $\{0,1,2,3,\ldots\}$ are used instead of the continuous interval
$[0,\infty)$. Fleischer and Tardos~\cite{Fleischer1998} show a close connection between the two models
for a number of flow over time problems, including the Quickest Transshipment Problem.
In this paper we exclusively work with the continuous time model. In particular, we do not require that
arc transit times or capacities are integral. Our strongly polynomial algorithm can solve instances where
transit times and capacities are arbitrary non-negative real numbers.

\section{A Simple Algorithm Akin to Discrete Newton}\label{sec:evac}
%
Given an instance $(\dynN,b)$ of the Quickest Transshipment Problem, a straightforward approach to
determine the minimum feasible time horizon of $(\dynN,b)$ is Algorithm~\ref{alg_evac}, which has first
been considered by Kamiyama~\cite{kamiyama2019} and Schlöter~\cite{schloter2018flows}. It computes a
sequence of infeasible time horizons~$0=\theta_0<\theta_1<\theta_2<\dots$ until it finds~$\theta^*$. The value~$\theta_{i+1}$ is always the zero of the function~$\theta\mapsto d^{\theta}(S_i)$ where~$S_i$
is a minimizer of the submodular function~$d^{\theta_i}$. The seemingly useless distinction
of~$\theta'_{i}$ and~$\theta_{i+1}$ in Lines~4 and 5 is owed to a crucial refinement introduced in Section~\ref{sec:main}
below.
%
%
\begin{algorithm}[t]
\DontPrintSemicolon
\caption{Discrete Newton like approach}
\label{alg_evac}
\KwIn{an instance $(\dynN,b)$ of the Quickest Transshipment Problem}
\KwOut{minimum feasible time horizon $\theta^*$ for $(\N,b)$}
$i:= 0$, $\;\theta_0 := 0$

\While{$d(\theta_i) < 0$}{
	$S_{i} := \operatorname{argmin}\{d^{\theta_{i}}(S)  :  S \subseteq S^+\cup S^- )\}$\;
	$\theta'_{i} := \min\{\theta: d^\theta(S_i) = 0\}$ \;
	$\theta_{i+1} := \theta'_i$\;
	$i := i+1$\;
}
\Return $\theta_{i}$
\end{algorithm}
In Figure~\ref{fig:alg_evac} three iterations of Algorithm~\ref{alg_evac} are illustrated.
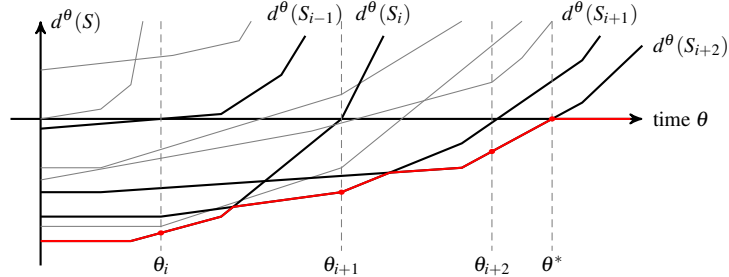
\begin{figure}[h]
\begin{center}
\begin{tikzpicture}[xscale = 0.8, yscale=0.65]
    \draw[->, >=stealth', thick] (-0.5,0)  -- (10,0) node(xline)[right]
        {\scriptsize time $\theta$};
    \draw[->, >=stealth', thick] (0, -3) -- (0,2) node(yline)[right] {$\scriptstyle d^\theta(S)$};

    \draw[thin, color = gray] (0,-1) -- (1,-1) -- (5,0.5)--(7,2);
    \draw[thin, color = gray] (0,-2.2)--(2,-2.2)--(5,-1)--(8,2);

    \draw[thick, color = black] (0,-2.5) -- (1.5,-2.5)--(3,-2)--(4,-1)--(5,0)--(5.7,1.7) node[above, black] {$\scriptstyle d^\theta(S_i)$} ;
    \draw[thick, color=black] (0,-2)--(2,-2)--(5,-1.5)--(7,-0.5)--(9,1.2)--(9.3,1.7)node[black, above] {$\scriptstyle d^\theta(S_{i+1})$};
    \draw[thick, color=black] (0,-1.5)--(1,-1.5)--(7,-1)--(9,1/3)--(10,1.5)node[black, right] {$\scriptstyle d^\theta(S_{i+2})$};
    
    \draw[thin, color=gray] (0,1)--(2.2,1.3)--(3.3,1.6)--(3.5,2);

    \draw[thin, color=gray] (0,0)--++(1,0.2)--++(0.5,0.5)--(1.7,2);

    \draw[thin, color=gray] (0,-1.25)--++(4.5,1)--++(3,1)--++(0.5,0.5)--(8.5,2);

    \draw[thick, color=black] (0,-0.2)--++(3,0.3)--++(1,0.8)-- (4.4,1.7) node[black, above] {$\scriptstyle d^\theta(S_{i-1})$};

    \draw[densely dashed, gray, name path = it1] (2,2) -- (2,-2.7);
    \node[below] at (2,-2.6){$\scriptstyle\theta_{i}$};

    \draw[densely dashed, gray, name path = it2] (5,2) -- (5,-2.7);
    \node[below] at (5,-2.6){$\scriptstyle\theta_{i+1}$};

    \draw[densely dashed, gray, name path = it3] (7.5,2) -- (7.5,-2.7);
    \node[below] at (7.5,-2.6){$\scriptstyle\theta_{i+2}$};
    \node[below] at (8.5,-2.6){$\scriptstyle\theta^*$};

    \draw[densely dashed, gray, name path = it4] (8.5,2) -- (8.5,-2.7);

    \node (m1) at (2,-2.33333){};
    \node (m2) at (5,-1.5){};
    \node (m3) at (7.5,-0.666666){};

    \node (m4) at (8.5,0){};

    \fill[red] (m1) circle (1.5pt);
    \fill[red] (m2) circle (1.5pt);
    \fill[red] (m3) circle (1.5pt);
    \fill[red] (m4) circle (1.5pt);

    \coordinate (p1) at (3.2 ,-1.8);
    \coordinate (p2) at (5.8 ,-1.1);
    \draw[thick, color = red] (0,-2.5) -- (1.5,-2.5)--(3,-2)--(p1)--(5,-1.5)--(p2)--(7,-1)--(8.5,0)--(9.8,0);
\end{tikzpicture}
\end{center}
\caption{Last three iterations of Algorithm~\ref{alg_evac}}
\label{fig:alg_evac}
\end{figure}

\begin{lemma}\label{Lem:Prop_of_Alg_evac}
Let $i \geq 0$ be an arbitrary iteration of Algorithm~\ref{alg_evac} and assume $b\not\equiv0$.
Then,\\
(i) iteration~$i$ takes $\tilde{O}(m^2k^3 + m^2n)$ time;\quad (ii) $\theta_{i}<\theta_{i+1}$;\quad (iii) $S_i \neq S_j$, for two distinct iterations $i,j\geq 0$.
\end{lemma}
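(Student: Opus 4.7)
The plan is to treat the three parts separately. The key structural tool is that for each fixed $S\subseteq\terminals$, the function $\theta\mapsto d^\theta(S)$ is piecewise linear, non-decreasing, and convex, as observed immediately after Observation~\ref{Obs:Structure_of_Breakpoints}. Parts (ii) and (iii) will follow from this monotonicity together with the defining choices of $S_i$ and $\theta'_i$, while part (i) is a running-time accounting for the two subroutines in Lines~3 and~4.

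For (ii), the while-loop guard gives $d(\theta_i)<0$ and hence $d^{\theta_i}(S_i)<0$, since $S_i$ attains the minimum. Assuming the instance is feasible, $d^{\theta^*}(S_i)\geq 0$, so by continuity of $\theta\mapsto d^\theta(S_i)$ there is a smallest zero $\theta'_i\in(\theta_i,\theta^*]$, yielding $\theta_i<\theta'_i=\theta_{i+1}$. For (iii), suppose $i<j$ and $S_i=S_j$; iterating (ii) yields $\theta_j\geq\theta_{i+1}=\theta'_i$, and monotonicity then forces $d^{\theta_j}(S_j)=d^{\theta_j}(S_i)\geq d^{\theta'_i}(S_i)=0$. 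But the loop is still active at iteration~$j$, so $d^{\theta_j}(S_j)=d(\theta_j)<0$, a contradiction.

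For (i), Line~3 is a single submodular function minimization of $d^{\theta_i}$ over the $k$-element ground set $\terminals$. Each oracle call $S\mapsto d^{\theta_i}(S)$ reduces, as recalled in Section~\ref{secIntro}, to one min-cost circulation in $\N^{S,\theta_i}$ at cost $O(\MCF)=\tilde O(m^2)$ (using $n\leq m$). Combined with a strongly polynomial SFM algorithm making $\tilde O(k^3)$ oracle queries plus $\tilde O(k^4)$ additional work (for instance, Lee--Sidford--Wong), this yields $\tilde O(m^2k^3)$ for Line~3. Line~4 asks for the smallest $\theta$ with $o^\theta(S_i)=b(S_i)$, which is a quickest flow computation in the single-source/single-sink network $\N^{S_i}$ and is solved by Saho and Shigeno~\cite{Saho2017} in $\tilde O(m^2n)$. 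Summing these bounds gives the claimed $\tilde O(m^2k^3+m^2n)$ per iteration. The main obstacle is thus pinning down the SFM black-box used in Line~3 so that both the $\tilde O(k^3)$ query bound and the poly-in-$k$ overhead are achieved with the given oracle; (ii) and (iii) are, by comparison, immediate consequences of monotonicity.
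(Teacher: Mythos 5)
Your proposal is correct and follows essentially the same route as the paper: Lee--Sidford--Wong for the SFM in Line~3 plus Saho--Shigeno for the quickest-flow in Line~4 gives part~(i), and parts~(ii) and~(iii) are exactly the paper's monotonicity arguments (the paper phrases (ii) via $d^{\theta_i}(S_i)<0=d^{\theta_{i+1}}(S_i)$ rather than via the location of the zero, but it is the same reasoning). One small remark: you write "Assuming the instance is feasible" to justify existence of $\theta'_i$, which is indeed an implicit standing assumption the paper also makes without comment, so there is no gap.
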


\begin{proof} 
(i) In every iteration $i$, the submodular function minimizer $S_i$ can be computed in $\tilde{O}(m^2k^3)$
time with the algorithm of Lee, Sidford, and Wong~\cite{Lee2015}, while $\theta'_i$ can be determined in
$\tilde{O}(m^2n)$ time by solving a quickest $s$-$t$-flow problem in $\dynN^{S}$ with the algorithm of
Saho and Shigeno~\cite{Saho2017}.
For part~(ii), recall that $d^{\theta_{i}}(S_{i})<0$ (for~$i=0$ this follows from~$b\not\equiv0$) and $d^{\theta_{i+1}}(S_{i})=0$.
Since $d^\theta(S_{i})$ is non-decreasing in $\theta$, this implies that $\theta_{i}<\theta_{i+1}$.
For~(iii), take two distinct iterations~$i<j$ with~$\theta_i<\theta_{i+1}\leq\theta_j$ due to~(b). Since~$d^{\theta}(S_i)$ is non-decreasing in~$\theta$, we get
$d^{\theta_{i+1}}(S_i)=0\leq d^{\theta_j}(S_i)$. Thus~$d^{\theta_j}(S_j)<0\leq d^{\theta_j}(S_i)$ and~$S_i\neq S_j$.
\end{proof}

By Lemma~\ref{Lem:Prop_of_Alg_evac}~(iii), the number of iterations is at most~$2^k$ and Algorithm~\ref{alg_evac} runs in $\tilde{O}(2^k(m^2k^3+m^2n))$ time. 
The algorithm thus returns the smallest zero of the function~$d(\theta)$, that is, the minimum feasible time horizon~$\theta^*$ of~$(\dynN,b)$.
%
%
If~$|S^+|=1$ or~$|S^-|=1$, Algorithm~\ref{alg_evac} even runs in strongly polynomial
time~\cite{kamiyama2019,schloter2018flows}. We give a short sketch of the analysis for the Evacuation
Problem; the case~$|S^+|=1$ is symmetric.
%
%

If~$|S^-|=1$, the submodular functions $d^\theta$ and $d^{\theta'}$
are related by a so-called \emph{strong map} for all $0\leq\theta<\theta'$ \cite{Baumann2009}
(see~\cite{Topkis78} or~\cite{Nagano2007} for a formal definition of a strong map).
The strong map property in particular implies that the minimizers $S_0,\ldots,S_N$ computed during the
course of Algorithm~\ref{alg_evac} fulfill $S_N \subseteq S_{N-1} \subseteq \dots \subseteq S_0$,
provided $S_i$ is a minimal (or maximal) minimizer of~$d^{\theta_i}$ for all
$i=0,\dots,N$~\cite{Topkis78}.
Thus, by Lemma~\ref{Lem:Prop_of_Alg_evac} (iii), Algorithm~\ref{alg_evac} terminates after at most
$k$~iterations
%
%
and can be implemented to run in~$\tilde{O}(k(m^2k^3+m^2n))$ time.

Unfortunately, if $|S^-|>1$ and $|S^+|>1$, the submodular functions $d^\theta$ and $d^{\theta'}$ with
$0 \leq \theta < \theta'$ are in general \emph{not} related by a strong map, and the minimizers
$S_0,\ldots,S_N$ computed during the course of Algorithm~\ref{alg_evac} do not form a chain.
Thus, the analysis above cannot be extended to the Quickest Transshipment Problem.

\section{A Refined Algorithm with Long Jumps}\label{sec:main}
%
In this section we present an extension of Algorithm~\ref{alg_evac} that determines the minimum feasible
time horizon for a Quickest Transshipment Problem and has strongly polynomial running time.
As Algorithm~\ref{alg_evac}, the refined algorithm computes a sequence $0 = \theta_0 < \theta_1 < \theta_2 < \ldots < \ldots$ of infeasible time horizons until it finds $\theta^*$.
Every iteration $i$ of the while-loop in the refined Algorithm~\ref{alg_main} consists of \emph{two} updates.
At first the current time horizon~$\theta_i$ is updated to~$\theta_i'$, exactly as in
Algorithm~\ref{alg_evac}.
In particular, parts~$(ii)$ and~$(iii)$ of Lemma~\ref{Lem:Prop_of_Alg_evac} are also valid for
Algorithm~\ref{alg_main}, which implies that the algorithm terminates in finite time and correctly
returns the minimum feasible time horizon~$\theta^*$ for a given Quickest Transshipment Problem~$(\N,b)$.
The second update (jumping from~$\theta_i'$ to~$\theta_{i+1}$) ensures that Algorithm~\ref{alg_main} terminates after a strongly polynomial number of iterations.
See Figure~\ref{fig:alg_main} for a visualization of an iteration of Algorithm~\ref{alg_main}.
\begin{algorithm}[b]
\DontPrintSemicolon
\caption{Generalized Discrete Newton Algorithm}
\label{alg_main}
\KwIn{a dynamic network $\N$ and a supply/demand vector $b$}
\KwOut{minimum feasible time horizon $\theta^*$ for $(\N,b)$}
$i:= 0$, $\;\theta_0 := 0$, 
$\;J:=\bigl\{2^0,2^1,2^2,\ldots,2^{\lceil \log_2(k^2/4)\rceil}\bigr\}$\; 
\While{$d(\theta_i) < 0$}{
	$S_{i} := \operatorname{argmin}\{d^{\theta_{i}}(S)  :  S \subseteq S^+ \cup S^-)\}$\;
	$\theta_i' := \min\{\theta: d^\theta(S_i) = 0\}$ \;
	$\theta_{i+1} := \max\Bigl(  \{\theta'_i\} \cup \Bigl\{\theta =\theta_i' + j\cdot \frac{-d(\theta_i')}{\mincut ^{\theta_i'}(S_i)}: d(\theta)<0,~j \in J\Bigr\}\Bigr)$\;
	$i := i+1$\;
}
\Return $\theta^* := \theta_{i}$
\end{algorithm}
\begin{theorem}\label{Thm:Running_Time_Main}
Algorithm~\ref{alg_main} can be implemented to run in $\tilde O(m^2k^5+m^3k^3+m^3n)$ time and it returns the minimum feasible time horizon $\theta^*$ for a given Quickest Transshipment Problem $(\N,b)$.
\end{theorem}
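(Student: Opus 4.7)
The argument has three parts: correctness, per-iteration cost, and an iteration-count bound; the last is the main obstacle.

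\emph{Correctness.} Termination in finite time is immediate since parts (ii)--(iii) of Lemma~\ref{Lem:Prop_of_Alg_evac} carry over verbatim to Algorithm~\ref{alg_main}. To see that the returned value equals $\theta^*$, I would show $\theta_i\leq\theta^*$ inductively. Continuity and monotonicity of $d$ imply $d(\theta^*)=0$ and hence $d^{\theta^*}(S_i)\geq 0$, so by monotonicity of $\theta\mapsto d^\theta(S_i)$ we get $\theta_i'\leq\theta^*$. Line~5 then sets $\theta_{i+1}$ either to $\theta_i'$ (still $\leq\theta^*$) or to some $\theta_{i+1}>\theta_i'$ satisfying $d(\theta_{i+1})<0$, which forces $\theta_{i+1}<\theta^*$. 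The loop exits only when $d(\theta_i)\geq 0$, giving $\theta_i\geq\theta^*$, hence equality.

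\emph{Per-iteration cost.} Line~3 and the $|J|=O(\log k)$ feasibility tests in Line~5 each require one submodular-function minimization, costing $\tilde O(m^2k^3)$ via Lee--Sidford--Wong. Line~4 is a Quickest $s$-$t$-Flow computation in $\N^{S_i}$, costing $\tilde O(m^2n)$ via Saho--Shigeno; the same computation yields $\mincut^{\theta_i'}(S_i)$ as the flow value on the auxiliary arc $a'$ for free. Thus each iteration runs in $\tilde O(m^2k^3+m^2n)$ time.

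\emph{Iteration count.} It suffices to prove $N=\tilde O(k^2+m)$ iterations, since multiplying by the per-iteration cost (using $n=O(m)$) then gives the claimed $\tilde O(m^2k^5+m^3k^3+m^3n)$. I would partition iterations by the Line~5 outcome. \emph{Short} iterations, with $\theta_{i+1}=\theta_i'$ (no $j\in J$ keeps $d$ negative), act like ordinary Discrete Newton steps; combining Lemma~\ref{Lem:Prop_of_Alg_evac}(iii) (all $S_i$ distinct) with a Radzik-style slope-halving analysis and the ring-family machinery of Goemans--Gupta--Jaillet, I would bound their number by $\tilde O(k^2)$. \emph{Long} iterations, using a jump of size $j\cdot(-d(\theta_i')/\mincut^{\theta_i'}(S_i))$ for some $j\geq 1$, exploit the geometric progression $J=\{1,2,\ldots,2^{\lceil\log_2(k^2/4)\rceil}\}$: within $O(\log k)$ long jumps either the relevant slope parameter shrinks by a factor $\Omega(k^2)$ or we must cross a breakpoint of some $\theta\mapsto d^\theta(S)$, and Observation~\ref{Obs:Structure_of_Breakpoints} together with an amortized accounting caps their number by $\tilde O(k^2+m)$. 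The hardest step I foresee is the precise charging scheme that attributes each long jump to either a slope reduction or a newly visited breakpoint, so that the two categories combine cleanly to $\tilde O(k^2+m)$.
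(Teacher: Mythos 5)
Your correctness argument and your per-iteration cost analysis are both sound and match the paper's: Line~3 and each of the $O(\log k)$ (or $O(\log\log k)$, via binary search) feasibility tests in Line~5 cost one SFM at $\tilde O(m^2k^3)$, and Line~4 costs $\tilde O(m^2n)$ via Saho--Shigeno, giving $\tilde O(m^2k^3+m^2n)$ per iteration. Multiplied by an iteration count of $\tilde O(k^2+m)$ and using $n\le m+1$, this does recover $\tilde O(m^2k^5+m^3k^3+m^3n)$, so the high-level accounting is fine. (The paper additionally observes that iterations whose interval contains no breakpoint never need Saho--Shigeno in Line~4, but as you implicitly noticed, this refinement is not needed for the stated asymptotics.)

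The problem is the iteration bound, which is the entire substance of the theorem, and there your proposed decomposition does not work. You partition iterations by whether $j_i=0$ (``short'') or $j_i\geq 1$ (``long''). The paper instead partitions $I$ into $I^1$ (iterations taking the maximal jump $j_i=2^{\lceil\log_2(k^2/4)\rceil}$), $I^2$ (the remaining iterations that cross a breakpoint of some $\theta\mapsto d^\theta(S)$), and $I^3$ (the rest, where every $\theta\mapsto d^\theta(S)$ is affine on $[\theta_i,\theta_{i+1}]$). These do not line up: your ``short'' iterations can cross breakpoints, and the ring-family machinery (Lemmas~\ref{Lem:Value_on_ring_families}--\ref{lem:goe} and Theorem~\ref{Thm:Chain_of_ring_families}) is only applicable inside maximal breakpoint-free intervals, where $\mincut^\theta(S)$ is constant and one can extract a chain satisfying Lemma~\ref{Lem:Int_new}; so ``combine distinctness with ring families'' does not bound your short iterations. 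Conversely, your claim that ``within $O(\log k)$ long jumps either the relevant slope shrinks by $\Omega(k^2)$ or we cross a breakpoint'' misses the actual mechanism for $I^1$: the paper bounds $|I^1|\leq k^2/4$ not by slope shrinkage but by identifying, after each maximal jump, a source--sink pair $(s',t')$ that carries a $\ge 4/k^2$ fraction of $\mincut^{\theta_i'}(S_i)$ and showing via inequality~\eqref{eq::lemma_big_jump_2} that this pair can never again be separated by a later minimizer $S_\ell$. The $I^2$ bound then uses the distance-halving property (Lemma~\ref{Lem:alg_main_distance}), the $\{-1,0,1\}^A$-structure of breakpoints (Observation~\ref{Obs:Structure_of_Breakpoints}), and Lemma~\ref{Lem:Radzik} to get $O(m\log m)$. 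You would need to replace your short/long split and the hypothetical ``charging scheme'' by this three-way split; as written, the sketch leaves the central counting argument unproved and the pieces, if filled in along the lines you indicate, would not assemble into the claimed bound.
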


Before we precisely analyze Algorithm~\ref{alg_main}, we discuss a high-level explanation why the algorithm terminates in strongly polynomial time.
Since it is easy to see that each iteration of the algorithm runs in strongly polynomial time, the main effort is in bounding the number of iterations.
Let $I=\{0,\ldots,N\}$ be the iterations of the while-loop in Algorithm~\ref{alg_main}.
If the maximum computed in Line~5 of iteration~$i \in I$ is equal to $\theta_i'$, we set $j_i:=0$.
Otherwise, the maximum computed in Line~5 of iteration $i \in I$ is equal to $\theta_i' + j \cdot \frac{-d(\theta_i')}{\cut^{\theta_i'}(S_i)}$ with $j \in J$ and we set $j_i := j$.
First, we consider the iterations in which the algorithm makes the longest possible jump during the second update:
\[
\textstyle I^1~:=~\bigl\{i \in I: j_i =  2^{\lceil \log_2(k^2/4)\rceil}\bigr\}.
\]
In Section~\ref{subsec:I1} we prove that the jump in an iteration~$i\in I^1$ is so long (i.e., the
increase in the time horizon is so large) that we can identify a source-sink pair that could fulfill the
entirety of supplies and demands in the networks on its own.
Thus, this pair cannot be `encountered' in later iterations, yielding a bound of~$k^2/4$ on~$|I^1|$
(see Lemma~\ref{Lem:Cardinality_I_1}).

For all other iterations, the fact that we did not make the largest possible jump guarantees that we have
at least halved the distance to the minimum feasible time horizon (see Lemma~\ref{Lem:alg_main_distance}).
In Section~\ref{subsec:I2} we bound the number of iterations that `move across a breakpoint':
\[
I^2~:=~\bigl\{i \in I \setminus I^1: \theta\mapsto d^\theta(S) \text{ has a breakpoint in $[\theta_i, \theta_{i+1}]$ for some~$S\subseteq\terminals$}\bigr\}.
\]
The structure of breakpoints stated in Observation~\ref{Obs:Structure_of_Breakpoints} together with a
lemma due to Goemans implies that~$|I^2|$ is at most~$O(m\log m)$ (see Lemma~\ref{Lem:Cardinality_I_2}).
It remains to consider the remaining iterations
\[
I^3~:=~I\setminus(I^1 \cup I^2).	
\]
In this case all curves~$\theta\mapsto
d^\theta(S)$, for~$S\subseteq\terminals$, are linear `within' these iterations.
This removes the principal hurdle for using the classical Discrete Newton Method. 
We use ideas from the analysis of a Discrete Newton Method for line search in a polymatroid by Goemans et
al.~\cite{Goemans2017}, which we introduce in Section~\ref{subsec:ring}.
The key technique is to extract a chain of ring families from the sets~$S_i$ with~$i\in I^3$.
The length of such chains is bounded (Theorem~\ref{Thm:Chain_of_ring_families}). 
Based upon this, we prove in Section~\ref{subsec:I3} that~$|I^3|\in O(k^2\log k+m\log m\log k)$ (see
Lemma~\ref{Lem:Cardinality_I_3}).
Finally, in Section~\ref{subsec:theorem} we prove Theorem~\ref{Thm:Running_Time_Main}, which implies Theorem~\ref{Thm:main}.

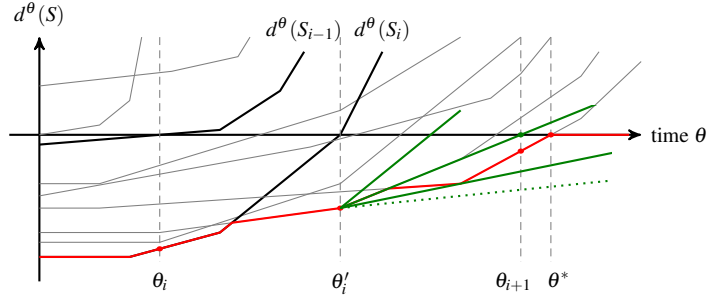
\begin{figure}[t]
\begin{center}
\begin{tikzpicture}[xscale = 0.8, yscale=0.65]
    \draw[->, >=stealth', thick] (-0.5,0)  -- (10,0);
    \draw[->, >=stealth', thick] (0, -3) -- (0,2) node(yline)[above] {$\scriptstyle d^\theta(S)$};

    \draw[thin, color = gray] (0,-1) -- (1,-1) -- (5,0.5)--(7,2);
    \draw[thin, color = gray] (0,-2.2)--(2,-2.2)--(5,-1)--(8,2);

    \draw[thick, color = black] (0,-2.5) -- (1.5,-2.5)--(3,-2)--(4,-1)--(5,0)--(5.7,1.7) node[above, black] {$\scriptstyle d^\theta(S_i)$} ;
    \draw[thin, color=gray] (0,-2)--(2,-2)--(5,-1.5)--(7,-0.5)--(9,1.2)--(9.3,1.7);
    \draw[thin, color=gray] (0,-1.5)--(1,-1.5)--(7,-1)--(9,1/3)--(10,1.5);
    
    \draw[thin, color=gray] (0,1)--(2.2,1.3)--(3.3,1.6)--(3.5,2);

    \draw[thin, color=gray] (0,0)--++(1,0.2)--++(0.5,0.5)--(1.7,2);

    \draw[thin, color=gray] (0,-1.25)--++(4.5,1)--++(3,1)--++(0.5,0.5)--(8.5,2);

    \draw[thick, color=black] (0,-0.2)--++(3,0.3)--++(1,0.8)-- (4.4,1.7) node[black, above] {$\scriptstyle d^\theta(S_{i-1})$};

    \draw[densely dashed, gray, name path = it1] (2,2) -- (2,-2.6) node[below, black]{$\scriptstyle\theta_{i}$};

    \draw[densely dashed, gray, name path = it2] (5,2) -- (5,-2.6) node[below, black]{$\scriptstyle\theta_{i}'$};

    \draw[densely dashed, gray, name path = it2] (8,2) -- (8,-2.6)node[below, black, xshift=-0.1cm] {$\scriptstyle\theta_{i+1}$};
    \draw[densely dashed, gray, name path = it2] (8.5,2) -- (8.5,-2.6)node[below, black, xshift = 0.1cm]{$\scriptstyle\theta^*$};

    \coordinate (m1) at (2,-2.33333){};
    \coordinate (m2) at (5,-1.5){};
    \coordinate (m3) at (7.5,-0.666666){};

    \coordinate (m4) at (8.5,0){};

    \fill[red] (m1) circle (1.5pt);
    \fill[red] (m2) circle (1.5pt);
    \fill[red] (m4) circle (1.5pt);

    \coordinate (p1) at (3.2 ,-1.8);
    \coordinate (p2) at (5.8 ,-1.1);
    \draw[thick, color = red] (0,-2.5) -- (1.5,-2.5)--(3,-2)--(p1)--(5,-1.5)--(p2)--(7,-1)--(8.5,0)--(9.8,0);


    \fill[red] (8,-0.33333) circle (1.5pt);

    \begin{scope}
          \clip(m2) rectangle (9.5,0.6);
          \draw[thick, color = green!50!black, name path = j1] (m2) -- ++(2,2) ;
          \draw[thick, color = green!50!black, name path = j2] (m2) -- ++(5,2.5);
          \draw[thick, color = green!50!black, name path = j3] (m2) -- ++(8,2);
          \draw[thick, color = green!50!black, dotted] (m2) -- ++(8,1);

        \fill[green!50!black] (8,0) circle (1.5pt); 
    \end{scope}



    \node[right] at (10,0) {\scriptsize time $\theta$};

\end{tikzpicture}
\end{center}
\caption{One iteration of Algorithm~\ref{alg_main}.
The green rays correspond to $j=2^0,2^1,2^2,\ldots$ and their 
slopes are equal to $\cut^{\theta_i'}(S_i)$, $\cut^{\theta_i'}(S_i)/2$,
$\cut^{\theta_i'}(S_i)/4,\ldots$, respectively.
As the ray with slope $\cut^{\theta_i'}(S_i)/4$ corresponding to $j = 4$ intersects the
time-axis after time $\theta^*$, the value $j=2$ is chosen in the displayed iteration $i$.
}
\label{fig:alg_main}
\end{figure}

\subsection{Bounding the Number of Iterations in~$I^1$}
\label{subsec:I1}

\begin{lemma}\label{Lem:Cardinality_I_1}
$|I^1|~\leq~k^2/4$.
\end{lemma}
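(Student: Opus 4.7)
The plan is to associate to every iteration $i\in I^1$ a source-sink pair $(s_i,t_i)\in(S^+\cap S_i)\times(S^-\setminus S_i)$ that is, in a suitable sense, ``powerful enough'' to transport all supplies and demands on its own within the horizon $\theta^*$, and then to show that the pairs obtained across $I^1$ are pairwise distinct. Since $|S^+|\cdot|S^-|\leq k^2/4$ by AM--GM applied to $|S^+|+|S^-|=k$, this will give the desired bound.

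First, I extract the geometric consequence of being in $I^1$. For $i\in I^1$ we have $j_i=2^{\lceil\log_2(k^2/4)\rceil}\geq k^2/4$ and $d(\theta_{i+1})<0$, hence $\theta_{i+1}<\theta^*$. Unfolding Line~5 of Algorithm~\ref{alg_main} yields
\[
\theta^*-\theta_i' \;>\; \frac{k^2}{4}\cdot\frac{-d(\theta_i')}{\cut^{\theta_i'}(S_i)}.
\]
Since $\cut^{\theta_i'}(S_i)$ is the left-hand derivative at $\theta_i'$ of the convex, non-decreasing function $\theta\mapsto d^\theta(S_i)$, and $d^{\theta_i'}(S_i)=0$, convexity gives $d^{\theta^*}(S_i)\geq \cut^{\theta_i'}(S_i)(\theta^*-\theta_i')>\tfrac{k^2}{4}(-d(\theta_i'))$. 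In other words, at time $\theta^*$ the set $S_i$ admits substantially more flow from $S^+\cap S_i$ to $S^-\setminus S_i$ than is required by the supply vector restricted to $S_i$.

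Second, I decompose a maximum flow over time from $S^+\cap S_i$ to $S^-\setminus S_i$ in $\N^{S_i}$ with horizon $\theta^*$ into pair-flows $F^i_{s,t}$ for $(s,t)\in(S^+\cap S_i)\times(S^-\setminus S_i)$. Then $\sum_{s,t}F^i_{s,t}=o^{\theta^*}(S_i)$ and, since there are at most $|S^+\cap S_i|\cdot|S^-\setminus S_i|\leq k^2/4$ such pairs, pigeonhole extracts a pair $(s_i,t_i)$ with $F^i_{s_i,t_i}\geq (4/k^2)\,o^{\theta^*}(S_i)$. Combining with the first-step lower bound and a careful comparison with the total supply $B:=b(S^+)$, the objective is to upgrade this to $F^i_{s_i,t_i}\geq B$, so that $(s_i,t_i)$ alone can realise all supplies and demands within horizon $\theta^*$.

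Third, I argue distinctness. If $(s_i,t_i)=(s_j,t_j)=(s,t)$ for some $i<j$ in $I^1$, then $s\in S^+\cap S_i\cap S_j$ and $t\in S^-\setminus(S_i\cup S_j)$. Combining the two ``$(s,t)$ handles everything on its own'' statements from iterations $i$ and $j$ with submodularity of $o^\theta$ and $d^\theta$ applied to $S_i,S_j$, the plan is to derive a contradiction to either the feasibility of $\theta^*$ or the minimality of $S_j=\operatorname{argmin} d^{\theta_j}$. Once distinctness is established, $i\mapsto(s_i,t_i)$ injects $I^1$ into $S^+\times S^-$, so $|I^1|\leq|S^+\times S^-|\leq k^2/4$.

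The main obstacle will be this last distinctness step, together with tightening the pigeonhole estimate to the threshold ``$\geq B$''. The naive pigeonhole gives only $F^i_{s_i,t_i}\gtrsim -d(\theta_i')$, which is in general strictly smaller than $B$; closing the gap will presumably require fixing a canonical flow decomposition (for instance via the min-cost circulation in $\N^{S_i,\theta^*}$) and exploiting monotonicity of $-d(\theta_i')$ along the iterations. The remaining ingredients---convex interpolation via the left-derivative $\cut^{\theta_i'}(S_i)$, path decomposition, and pigeonhole---are routine.
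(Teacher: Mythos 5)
Your proposal is on the right track to the extent that it associates a source--sink pair to each iteration in $I^1$ and tries to inject $I^1$ into $S^+\times S^-$, and your first step (using Line~5, $j_i\geq k^2/4$, and $d(\theta_{i+1})<0$ to get $\theta^*-\theta_i'>\tfrac{k^2}{4}\cdot\tfrac{-d(\theta_i')}{\cut^{\theta_i'}(S_i)}$ and then, via convexity and the left-hand derivative, $d^{\theta^*}(S_i)>\tfrac{k^2}{4}(-d(\theta_i'))$) is correct. However, the two steps you flag as incomplete are exactly where the proof lives, and the direction you propose for closing them does not work. Trying to upgrade the pigeonhole bound $F^i_{s_i,t_i}\geq\tfrac{4}{k^2}\,o^{\theta^*}(S_i)$ (or $\gtrsim -d(\theta_i')$) to $F^i_{s_i,t_i}\geq B$ is not true in general, and even if it were, two iterations $i<j$ producing the same ``all-powerful'' pair would not by itself contradict feasibility or the minimality of $S_j$; being able to carry $B$ units between $s$ and $t$ twice is not in tension with anything. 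So the distinctness step, as planned, would fail.

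The key ingredient you are missing is the following. Take a path decomposition $(x_P)_{P}$ of the static flow induced by a min-cost circulation in $\N^{S_i,\theta_i'}$, so $\sum_P x_P=\cut^{\theta_i'}(S_i)$, and pigeonhole to get a pair $(s',t')$ with $\sum_{P\,\text{is $s'$-$t'$-path}}x_P\geq\tfrac{4}{k^2}\cut^{\theta_i'}(S_i)$. The crucial extra fact (the middle inequality in the paper's display) is that this same flow value is a \emph{lower bound on $\cut^{\theta_i'}(S)$ for every $S$ with $s'\in S\cap S^+$ and $t'\in S^-\setminus S$}, i.e., a lower bound on the left-hand slope of $\theta\mapsto d^\theta(S)$ at $\theta_i'$ uniformly over all such $S$. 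With that in hand, the paper avoids the ``carries all supply'' route entirely: if some later iteration $\ell>i$ had $s'\in S_\ell$ and $t'\in S^-\setminus S_\ell$, then by convexity
\begin{align*}
d^{\theta_\ell}(S_\ell)
&\geq d^{\theta_i'}(S_\ell)+(\theta_\ell-\theta_i')\cdot\tfrac{4}{k^2}\cut^{\theta_i'}(S_i)\\
&\geq d(\theta_i')+j_i\cdot\tfrac{-d(\theta_i')}{\cut^{\theta_i'}(S_i)}\cdot\tfrac{4}{k^2}\cut^{\theta_i'}(S_i)\geq 0,
\end{align*}
contradicting $d^{\theta_\ell}(S_\ell)<0$. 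Hence once $(s',t')$ is produced in iteration $i\in I^1$, it can never again satisfy $s'\in S_\ell$, $t'\notin S_\ell$ in a later iteration; in particular the pairs are distinct across $I^1$, giving $|I^1|\leq k^2/4$. Note the contradiction uses $S_\ell$, not $S_i$ at $\theta^*$, and uses the jump length $j_i$, not the total supply $B$; this is both simpler and actually provable.
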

\begin{proof}

Consider an iteration $i\in I^1$, i.e., $j_{i}=2^{\left\lceil \log_2(k^2/4)\right\rceil}\geq k^2/4$.
Let $x$ be a minimum-cost circulation in~$\dynN^{S_i,\theta'_i}$ with transit times as costs.
The static flow~$x|_{\dynN}$ induced by~$x$ on~$\dynN$ sends flow from~$S^+\cap S_i$ to~$S^-\setminus S_i$ of total value~$x_{a'}$, where~$a'$ is the arc connecting super-sink~$t$ to super-source~$s$ in~$\dynN^{S_i,\theta'_i}$.
Let $\cP_i$ denote the set of all paths from~$S^+\cap S_i$ to~$S^-\setminus S_i$ in~$\dynN$, and let $(x_P)_{P \in \cP_i}$ be a path decomposition of~$x|_{\dynN}$.
Recall that~$\mincut^{\theta_i'}(S_i)=x_{a'}=\sum_{P\in\cP_i}x_P$.
Since there are at most $k^2/4$ many source-sink pairs, there has to exist at least one source-sink pair~$s',t'$ with~$s'\in S_i$ and~$t'\in S^-\setminus S_i$ such that
\begin{align}\label{eq:lemma_big_jump_1}
\frac{4}{k^2} \cdot  \mincut^{\theta'_i}(S_i)~
\leq~\sum_{\substack{P\in\cP_i,\\ \text{$P$ is $s'$-$t'$-path}}} x_P~
\leq~\cut^{\theta_i'}(S)
\end{align}
for all~$S\subseteq\terminals$ with~$s'\in S\cap S^+$ and~$t'\in S^-\setminus S$.
We claim that, for any~$\ell>i$, it is impossible that both~$s'\in S_{\ell}\cap S^+$
and~$t'\in S^-\setminus S_{\ell}$.
Assume by contradiction that~$s'\in S_{\ell}$ and~$t'\in S^-\setminus S_{\ell}$ for some~$\ell>i$. 
The left-hand derivative of the convex function~$\theta\mapsto d^\theta(S_\ell)$ at time~$\theta_i'$ is equal to~$\cut^{\theta_i'}(S_\ell)$ and therefore at least~$\frac4{k^2}\cut^{\theta_i'}(S_i)$ by~\eqref{eq:lemma_big_jump_1}. Thus,
\begin{align}\label{eq::lemma_big_jump_2}
\begin{split}
d^{\theta_{\ell}}(S_{\ell})~
&\geq~ d^{\theta_{i}'}(S_{\ell}) + (\theta_{\ell}-\theta_i') \cdot \tfrac4{k^2}\cdot \mincut^{\theta_i'}(S_i)\\
&\geq~d(\theta_i') + (\theta_{i+1}-\theta_i') \cdot \tfrac4{k^2}\cdot \mincut^{\theta_i'}(S_i)~
=~d(\theta_i') + j_i\cdot\tfrac{-d(\theta_i')}{\mincut^{\theta_i'}(S_i)} \cdot \tfrac4{k^2}\cdot \mincut^{\theta_i'}(S_i)~\geq~0,
\end{split}
\end{align}
where the second inequality follows from $d(\theta_i')=\min_{S}d^{\theta_{i}'}(S)$
and~$\theta_\ell\geq\theta_{i+1}$ (Lemma~\ref{Lem:Prop_of_Alg_evac}~(ii)), and the last inequality follows
from~$j_i\geq k^2/4$.
Notice that~\eqref{eq::lemma_big_jump_2} yields a contradiction since~$d^{\theta_{\ell}}(S_{\ell})<0$.

We have thus shown that, after iteration~$i$, the source-sink pair $s',t'$ cannot `occur' anymore.
That is, in each iteration~$i\in I^1$ we lose one source-sink pair.
We conclude that~$|I^1|\leq k^2/4$, as there are at most~$k^2/4$ many distinct source-sink pairs, 
\end{proof}


%

\subsection{Bounding the Number of Iterations in~$I^2$}
\label{subsec:I2}

In order to bound~$|I^2|$, we use the following corollary of a result by Goemans communicated by Radzik~\cite{Radzik1998}.

\begin{lemma}\label{Lem:Radzik}
Let $c \in \R^{p}_{\geq 0}$ and let $\lambda^1,\ldots,\lambda^q \in \{-2,-1,0,1,2\}^p$. If 
$$0 < c^T \lambda^{i+1} \leq \tfrac12 c^T \lambda^i \qquad \text{for all $i \in \{1,\ldots,q-1\}$,}$$
then $q \in O(p\log p)$.
\end{lemma}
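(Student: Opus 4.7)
The plan is to combine a normalization step, an Abel-summation rewriting, and a two-level pigeonhole argument that separates a ``dimension'' factor of $p$ from a ``depth'' factor of $\log p$ coming from the geometric decay. First I would normalize: zero coordinates of $c$ can be dropped, so I may assume $c_j > 0$ for all $j$, and after permuting indices that $c_1 \ge c_2 \ge \ldots \ge c_p > 0$. I would also reduce to sign vectors in $\{-1,0,1\}^p$: each $\lambda^i \in \{-2,\ldots,2\}^p$ splits coordinatewise as $\lambda^i = \mu^i + \nu^i$ with $\mu^i, \nu^i \in \{-1,0,1\}^p$, and keeping at each step whichever of $\mu^i, \nu^i$ carries at least half of $c^T \lambda^i$ produces a subsequence of length $\Omega(q)$ still satisfying a halving condition up to a constant factor.

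Next I would apply Abel summation to decouple the weights from the combinatorics. Setting $c_{p+1} := 0$, $w_h := c_h - c_{h+1} \ge 0$, and $S^i_h := \sum_{j \le h} \lambda^i_j$, we have
\begin{equation*}
c^T \lambda^i \;=\; \sum_{h=1}^{p} w_h \, S^i_h,
\end{equation*}
which expresses $c^T \lambda^i$ as a non-negative combination of the integer partial sums $S^i_h$, each bounded by $|S^i_h| \le h \le p$. This converts the problem into one about geometrically decaying non-negative combinations of bounded integers, where the combinatorial ``shape'' of $\lambda^i$ is captured by the $p$-tuple $(S^i_1, \ldots, S^i_p)$ drawn from a set of size at most $(2p+1)^p$.

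To conclude I would use a potential/pigeonhole argument on these partial-sum tuples. For each iteration $i$, pigeonhole produces an index $h_i$ with $w_{h_i} S^i_{h_i} \ge (c^T \lambda^i)/p > 0$. Classifying iterations by the pair $(h_i, S^i_{h_i})$, the halving condition limits how many iterations can share a class before the contribution of that fixed integer term falls out of range: roughly $O(\log p)$ halvings can occur before $c^T \lambda^i$ shrinks by a factor of $p$ and forces the dominant index or value to change. Combined with $O(p)$ possible ``pivots'' to a new dominant coordinate, a careful amortization over these $p$ indices should yield the target bound $q = O(p \log p)$.

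The hard part will be calibrating this amortization tightly. A naive count of $O(p^2)$ classes times $O(\log p)$ halvings per class would give only $O(p^2 \log p)$, which is too weak; the sharper $O(p \log p)$ bound requires Goemans's global potential that exploits the bounded range of \emph{all} partial sums $S^i_h$ simultaneously, rather than tracking a single dominant index independently per iteration. I expect this to reduce to an inductive statement of the form ``among any $Cp\log p$ iterations there exist three indices $i_1 < i_2 < i_3$ on which the partial-sum vectors behave too rigidly to be compatible with the halving inequalities,'' producing the desired contradiction.
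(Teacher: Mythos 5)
The paper's proof of this lemma is a one-line reduction: Goemans (communicated via Radzik~\cite{Radzik1998}) already proved the special case with $\lambda^i \in \{-1,0,1\}^p$, and the general case follows by setting $c' := (c,c) \in \R^{2p}_{\geq 0}$ and writing each $\lambda^i$ as $\mu^i + \nu^i$ with $\mu^i,\nu^i \in \{-1,0,1\}^p$: then $(\mu^i,\nu^i) \in \{-1,0,1\}^{2p}$ and $(c')^T(\mu^i,\nu^i) = c^T\lambda^i$ \emph{exactly}, so the halving hypothesis is inherited verbatim and the known bound gives $q \in O(2p\log 2p) = O(p\log p)$. Your proposal differs on both the reduction and the core argument, and has genuine gaps in each.

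On the reduction: your idea of keeping ``whichever of $\mu^i,\nu^i$ carries at least half of $c^T\lambda^i$'' does not preserve any halving-type inequality, even up to constants. The discarded half can have large \emph{negative} inner product with $c$, so the retained half $\rho^i$ can satisfy $c^T\rho^i \gg c^T\lambda^i$, bounded only by $\|c\|_1$; consequently $c^T\rho^{i+1}$ may exceed $c^T\rho^i$, and no thinning to a subsequence of length $\Omega(q)$ repairs this in general. The stacking trick $c' = (c,c)$ avoids the issue entirely because it records \emph{both} halves rather than choosing one.

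On the core argument: the lemma's substance is precisely the $\{-1,0,1\}^p$ case, which the paper cites as a known result rather than reproving. Your Abel-summation setup (sort $c$, pass to differences $w_h = c_h - c_{h+1}$ and partial sums $S^i_h$) is a sensible first step and is indeed the right coordinate system, but you yourself note that the naive pigeonhole on a single dominant index only yields $O(p^2\log p)$, and you defer the improvement to ``Goemans's global potential'' and an inductive statement you ``expect'' to hold. That deferred step \emph{is} the lemma; without it the proposal does not constitute a proof. If you intend to reprove the $\{-1,0,1\}^p$ case rather than cite it, you need to actually carry out the amortization (the argument in the literature tracks all partial sums $S^i_h$ jointly via a potential that charges halvings against the $p$ thresholds $c_1,\dots,c_p$, not against a per-iteration dominant index). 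Otherwise, simply invoke the known special case and use the exact, lossless $c' = (c,c)$ reduction as the paper does.
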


\begin{proof}
Goemans~\cite{Radzik1998} proved the special case with~$\lambda^1,\ldots,\lambda^q \in \{-1,0,1\}^p$.
Lemma~\ref{Lem:Radzik} follows by applying this special case to $c':=\bigl({c\atop c}\bigr)\in
\R^{2p}_{\geq 0}$.
\end{proof}

In order to be able to apply Lemma~\ref{Lem:Radzik} in our setting, we make the following observation:
In all iterations in~$I^2$ and~$I^3$, the second update (updating $\theta'_{i}$ to $\theta_{i+1}$)
reduces the distance of the current time $\theta_i$ to the minimal feasible time horizon $\theta^*$ by at
least half.
\begin{lemma}\label{Lem:alg_main_distance}
For every  $i \in I^2 \cup I^3$ with $i<N$ we have $\tfrac{1}{2} (\theta^*- \theta_{i}) \leq \theta_{i+1}-\theta_i$.
If $i \in I^2\cup I^3$, $i<N$, and $j_i > 0$, then $ \frac12(\theta^*-\theta_i') \leq \theta_{i+1}-\theta_i'$.
\end{lemma}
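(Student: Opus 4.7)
My plan is to prove the second statement first, since the first statement will then follow from it almost directly when $j_i>0$ and will need only one additional convexity argument when $j_i=0$. The core observation is that iterations $i\in I^2\cup I^3$ are precisely those for which the algorithm did \emph{not} take the largest available jump in Line~5 of Algorithm~\ref{alg_main}, so the next-larger candidate jump must already overshoot $\theta^*$.

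For the second statement, I will let $i\in I^2\cup I^3$ with $j_i>0$. Then $j_i\in J$ is a power of two strictly less than $2^{\lceil\log_2(k^2/4)\rceil}$ (else $i\in I^1$), so $2j_i\in J$ as well. Since $j_i>0$ forces $d(\theta_i')<0$, the step $\tfrac{-d(\theta_i')}{\mincut^{\theta_i'}(S_i)}$ is strictly positive, and the candidate point corresponding to $2j_i$ lies strictly above $\theta_{i+1}$. The maximality defining $\theta_{i+1}$ then forces this candidate not to be in the set, i.e.,
\[
d\Bigl(\theta_i' + 2j_i\cdot\tfrac{-d(\theta_i')}{\mincut^{\theta_i'}(S_i)}\Bigr)~\geq~0,
\]
so $\theta^*\leq \theta_i' + 2(\theta_{i+1}-\theta_i')$, which rearranges to $\tfrac12(\theta^*-\theta_i')\leq\theta_{i+1}-\theta_i'$.

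For the first statement with $j_i>0$, I will decompose $\theta_{i+1}-\theta_i=(\theta_{i+1}-\theta_i')+(\theta_i'-\theta_i)$, apply the second statement to the first summand, and use $\theta_i'-\theta_i\geq\tfrac12(\theta_i'-\theta_i)>0$ (positivity by the argument of Lemma~\ref{Lem:Prop_of_Alg_evac}(ii), which transfers verbatim to Algorithm~\ref{alg_main}) on the second, to obtain $\theta_{i+1}-\theta_i\geq\tfrac12(\theta^*-\theta_i)$. For $j_i=0$ we have $\theta_{i+1}=\theta_i'$, and the same maximality reasoning applied to $j=1\in J$ gives $\theta^*-\theta_i'\leq\tfrac{-d(\theta_i')}{\mincut^{\theta_i'}(S_i)}$. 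I will then invoke convexity of $\theta\mapsto d^\theta(S_i)$ together with the fact that $\mincut^{\theta_i'}(S_i)$ equals its left-hand derivative at $\theta_i'$ (recalled just before Observation~\ref{Obs:Structure_of_Breakpoints}): the secant slope on $[\theta_i,\theta_i']$ is at most $\mincut^{\theta_i'}(S_i)$, so $\theta_i'-\theta_i\geq\tfrac{-d(\theta_i)}{\mincut^{\theta_i'}(S_i)}$. Monotonicity of $d$ gives $-d(\theta_i)\geq-d(\theta_i')$, hence $\theta_i'-\theta_i\geq\theta^*-\theta_i'$, and thus $\theta^*-\theta_i\leq 2(\theta_i'-\theta_i)=2(\theta_{i+1}-\theta_i)$.

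The only subtle step will be the $j_i=0$ case, where I must glue the overshoot bound coming from Line~5 together with the convexity-based lower bound on $\theta_i'-\theta_i$ obtained from the left-hand derivative interpretation of $\mincut^{\theta_i'}(S_i)$; everything else is algebraic bookkeeping once the maximality argument for $\theta_{i+1}$ is spelled out.
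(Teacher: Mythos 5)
Your proof is correct and follows essentially the same argument as the paper's: for $j_i>0$ the maximality in Line~5 forces the $2j_i$-candidate to overshoot $\theta^*$, and for $j_i=0$ the secant slope $r=-d(\theta_i)/(\theta_i'-\theta_i)$ of $\theta\mapsto d^\theta(S_i)$ is bounded by $\mincut^{\theta_i'}(S_i)$ via convexity, after which both cases close with the monotonicity of $d$ and simple algebra. The only difference is organizational: you prove the second inequality first and then derive the first from it, whereas the paper proves the first inequality directly and remarks that the second follows from its intermediate display.
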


\begin{proof}
	We start by proving the first statement of the lemma.
	For an iteration $i \in I^2 \cup I^3$ with $i<N$ we distinguish the cases $j_i>0$ and $j_i = 0$.
	We first consider the case $j_i > 0$, i.e., $\theta_{i+1} = \theta'_{i} + j_i \cdot \frac{-d(\theta_i')}{\cut^{\theta_i'}(S_i)}$ and the time $\eta := \theta'_i + 2\cdot j_i \cdot \frac{-d(\theta_i')}{\mincut ^{\theta_i'}(S_i)}$ fulfills $d(\eta) \geq 0$ and thus $\eta \geq \theta^*$ by the choice of $j_i$ in Line~5 of Algorithm~\ref{alg_main}.
	Overall, we have
	\begin{align}\label{eq:dist_j_is_not_zero}
	\tfrac{1}{2}(\theta^*-\theta_i)
	~\leq~ \tfrac{1}{2}(\eta - \theta_i')+\tfrac{1}{2}(\theta_i' - \theta_{i})
	~=~ \theta_{i+1} - \theta_i' +  \tfrac{1}{2}(\theta_i' - \theta_{i})
	~<~\theta_{i+1}- \theta_i.
	\end{align}
	Next we consider the case $j_i =0$; an illustration is given in Figure~\ref{Fig:j_is_zero}.
	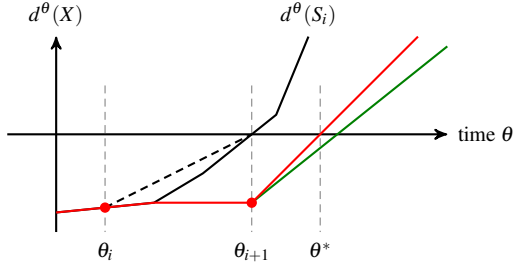
\begin{figure}[ht!]
	\begin{center}
	\begin{tikzpicture}[scale=1.3]
    \draw[->, >=stealth', thick] (1.5,0)  -- (6,0) node(xline)[right]{\scriptsize
        time $\theta$};
    \draw[->, >=stealth', thick] (2, -1) -- (2,1) node(yline)[above] {$\scriptstyle d^\theta(X)$};
    \draw[color = black, thick] (2,-0.8) -- (3,-0.7)--(3.5,-0.4)--(4,0)--++(0.25,0.2)--(4.58,1)node[black, above] {$\scriptstyle d^\theta(S_{i})$};;

  \draw[densely dashed, gray] (2.5,0.5) -- (2.5,-1);
  \draw[densely dashed, gray] (4,0.5) -- (4,-1);
  \draw[densely dashed, gray] (4.7,0.5) -- (4.7,-1);

    \draw[thick, green!50!black] (4,-0.7) --(6,0.9);

    \draw[densely dashed, thick](2.5,-0.75)--(4,0);

  \fill[red] (4,-0.7) circle (1.5pt){};
  \fill[red] (2.5,-0.75)  circle (1.5pt){};

    \node[below] at (4.7,-1) {$\scriptstyle\theta^*$};
      \node[below] at (2.5,-1) {$\scriptstyle\theta_i$};
      \node[below] at (4,-1) {$\scriptstyle\theta_{i+1}$};

    \draw[thick, color = red] (2,-0.8) -- (3,-0.7)--(4,-0.7)--(4.7,0)--(5.7,1);
\end{tikzpicture}
	\end{center}
	\caption{An iteration~$i$ of Algorithm~\ref{alg_main} with $j_i=0$.
	The red line shows the lower envelope~$\theta\mapsto d(\theta)$.
	The green ray corresponds to~$j_i=1$.
	Its slope is $\cut^{\theta_{i+1}}(S_i)$ and it intersects the time-axis after~$\theta^*$.
	The dashed line between $(\theta_i,d(\theta_i))$ and $(\theta_{i+1},0)$ has a slope that is smaller than $\cut^{\theta_{i+1}}(S_i)$.
	}
	\label{Fig:j_is_zero}
	\end{figure}
	In this case we have $\theta_{i+1} = \theta_i'$, and $\eta := \theta_{i+1} + \frac{-d(\theta_{i+1})}{\mincut^{\theta_{i+1}}(S_i)}\geq\theta^*$.

	Let $r:= -d(\theta_i)/(\theta_{i+1}-\theta_i)$ be the slope of the line through points $(\theta_i,d(\theta_i))$ and $(\theta_{i+1},0)$.
	Since~$\theta \mapsto o^\theta(S_i)$ is convex, we get~$r\leq\mincut^{\theta_{i+1}}(S_i)$.\
	Together with $-d(\theta_i) \geq -d(\theta_{i+1})$ for the last inequality, this yields
	\begin{align*}
	\begin{split}
	\tfrac{1}{2}(\theta^*-\theta_i)~
	\leq~\tfrac{1}{2}(\eta - \theta_{i+1} + \theta_{i+1}-\theta_i)~ 
	=~\frac{1}{2}\left(\frac{-d(\theta_{i+1})}{\cut^{\theta_{i+1}}(S_i)} + \frac{-d(\theta_i)}{r} \right)
	\leq~\frac{-d(\theta_i)}{r}~
	=~\theta_{i+1}-\theta_i.	
	\end{split}
	\end{align*}

	The second part of the lemma is directly implied by~\eqref{eq:dist_j_is_not_zero}.
\end{proof}

\begin{lemma}\label{Lem:Cardinality_I_2}
$|I^2|~\in~O(m\log m)$.
\end{lemma}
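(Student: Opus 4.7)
The plan is to invoke Lemma~\ref{Lem:Radzik} with $c := \tau \in \R^m_{\geq 0}$ and $p := m$, applied to a sequence of difference vectors built from breakpoints crossed during iterations in $I^2$. First, enumerate $I^2 = \{i_1 < i_2 < \cdots < i_q\}$ in increasing order. By the defining property of $I^2$, each $\ell \in \{1,\ldots,q\}$ admits a breakpoint $\beta_\ell$ of $\theta \mapsto d^\theta(S^{(\ell)})$, for some $S^{(\ell)} \subseteq \terminals$, lying in $[\theta_{i_\ell}, \theta_{i_\ell+1}]$. Observation~\ref{Obs:Structure_of_Breakpoints} then yields a vector $\lambda^{(\ell)} \in \{-1,0,1\}^A$ with $\beta_\ell = \tau^T \lambda^{(\ell)}$.

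Next, set $\mu^{(\ell)} := \lambda^{(\ell+1)} - \lambda^{(\ell)} \in \{-2,-1,0,1,2\}^A$, so that $\tau^T \mu^{(\ell)} = \beta_{\ell+1} - \beta_\ell$. If I can verify the two hypotheses of Lemma~\ref{Lem:Radzik}---(a) $\tau^T \mu^{(\ell)} > 0$ for every $\ell$, and (b) $\tau^T \mu^{(\ell+1)} \leq \tfrac12 \tau^T \mu^{(\ell)}$ for every $\ell$---then the lemma delivers $q-1 \in O(m \log m)$, hence $|I^2| \in O(m \log m)$ as required.

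Property (a) is essentially immediate: the chain $\beta_{\ell+1} \geq \theta_{i_{\ell+1}} \geq \theta_{i_\ell+1} \geq \beta_\ell$ gives $\tau^T \mu^{(\ell)} \geq 0$, and strict positivity can be enforced by choosing breakpoints judiciously or by coalescing consecutive $I^2$-indices that would otherwise share a breakpoint (costing only a constant factor in the final count).

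The hard part is property (b), and the main obstacle of the proof. Writing $D_i := \theta^* - \theta_i$, Lemma~\ref{Lem:alg_main_distance} gives $D_{i+1} \leq \tfrac12 D_i$ for every $i \in I^2 \cup I^3$; iterating this (and using that interspersed $I^1$-iterations only accelerate the contraction of $D_i$) yields $D_{i_{\ell+1}} \leq \tfrac12 D_{i_\ell}$, and the natural upper bound $\beta_{\ell+2} - \beta_{\ell+1} \leq D_{i_{\ell+1}}$ then has exactly the right order of magnitude. What remains is to produce a matching lower bound $\beta_{\ell+1} - \beta_\ell \geq \Omega(D_{i_\ell})$. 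Here the key input is again Lemma~\ref{Lem:alg_main_distance}, which guarantees $\theta_{i_\ell+1} - \theta_{i_\ell} \geq \tfrac12 D_{i_\ell}$, so that the interval $[\theta_{i_\ell}, \theta_{i_\ell+1}]$ itself has length at least a constant fraction of $D_{i_\ell}$, leaving enough room to choose $\beta_\ell$ (for instance, as the smallest breakpoint in the interval, possibly combined with casework on whether the relevant breakpoints cluster near $\theta_{i_\ell}$ or $\theta_{i_\ell+1}$) so that $\beta_{\ell+1} \geq \theta_{i_{\ell+1}}$ is bounded away from $\beta_\ell$ by a constant fraction of $D_{i_\ell}$. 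Managing these corner cases cleanly---and verifying that the chosen coalescing/selection rule is compatible with the chain of inequalities above---is where the technical care of the proof will go.
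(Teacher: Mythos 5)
Your overall plan---build vectors in $\{-2,-1,0,1,2\}^A$ from pairs of breakpoints, each of the form $\tau^T\lambda$ with $\lambda\in\{-1,0,1\}^A$ via Observation~\ref{Obs:Structure_of_Breakpoints}, and feed them to Lemma~\ref{Lem:Radzik}---is the right frame, and your use of Lemma~\ref{Lem:alg_main_distance} to get geometric contraction is on target. However, the specific sequence you choose (consecutive differences $\beta_{\ell+1}-\beta_\ell$) creates a gap that you correctly flag as ``the hard part'' but then do not close, and in fact it cannot be closed as stated. For the halving condition (b) you need a lower bound $\beta_{\ell+1}-\beta_\ell\geq\Omega(\theta^*-\theta_{i_\ell})$, but consecutive breakpoints have no reason to be well-separated: if $i_{\ell+1}=i_\ell+1$, then $\beta_\ell$ can sit arbitrarily close to $\theta_{i_\ell+1}$ and $\beta_{\ell+1}$ arbitrarily close to $\theta_{i_{\ell+1}}=\theta_{i_\ell+1}$, making $\beta_{\ell+1}-\beta_\ell$ as small as you like relative to $\theta^*-\theta_{i_\ell}$. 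Coalescing indices whose chosen breakpoints nearly coincide does not obviously help: a long run of such near-coincidences can force you to merge many iterations into one group, and you have no independent bound on how many groups result, which is exactly the quantity you are trying to control.

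The paper sidesteps this entirely by measuring each breakpoint against a \emph{fixed} reference point rather than against its predecessor: let $r_0$ be the largest breakpoint in $[0,\theta^*]$, and consider $r_0-r_\ell$. The chain $r_0 - r_{\ell+2} \leq r_0 - \theta_{i_{\ell+2}} \leq \tfrac12(r_0-\theta_{i_{\ell+1}}) \leq \tfrac12(r_0-r_\ell)$ uses only one-sided bounds ($r_\ell\leq\theta_{i_\ell+1}\leq\theta_{i_{\ell+1}}$ on the right, $r_{\ell+2}\geq\theta_{i_{\ell+2}}$ on the left), plus the halving of $\theta^*-\theta_i$ from Lemma~\ref{Lem:alg_main_distance}; no lower bound on a gap between two breakpoints is ever needed. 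The vectors $r_0-r_\ell=\tau^T(\lambda^0-\lambda^\ell)$ still lie in $\{-2,-1,0,1,2\}^A$, and applying Lemma~\ref{Lem:Radzik} to every second term gives $|I^2|\in O(m\log m)$. Replacing your consecutive-difference sequence with this anchored sequence is the missing idea.
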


\begin{proof}
Let $I^2=\{i_1,i_2,\cdots,i_{|I^2|}\}$ with $i_1<i_2<\cdots<i_{|I^2|}$. 
By the definition of $I^2$ there exists a breakpoint $r_\ell\in\left[\theta_{i_\ell},\theta_{i_{\ell}+1}\right]$ for each $1\leq \ell \leq |I^2|$.
Moreover, let~$r_{0}$ be the largest infeasible breakpoint, i.e.,
\[
r_{0}~:=~\max\{\theta'\in [0,\theta^*]:\theta\mapsto d^\theta(S) \text{ has a breakpoint at $\theta'$ for some~$S\in\terminals$}\}.
\]
By Observation~\ref{Obs:Structure_of_Breakpoints}, there exist vectors $\lambda^\ell\in\{-1,0,1\}^A$ such that $r_\ell = \sum_{a \in A} \lambda^\ell_a \tau_a$ for each $0\leq \ell \leq |I^2|$.
For two consecutive iterations $i_\ell$ and $i_{\ell+1}$ with $1 \leq \ell \leq |I^2|-1$,
Lemma~\ref{Lem:alg_main_distance} yields
\[
(\theta_{i_{\ell+1}}-\theta_{i_\ell}) 
~\geq~ \tfrac{1}{2}(\theta^* - \theta_{i_\ell}) 
~\geq~ \tfrac{1}{2}(r_0 - \theta_{i_\ell}),
\]
and thus $(r_0-\theta_{i_{\ell+1}}) \leq \tfrac{1}{2}(r_0-\theta_{i_\ell})$.
For two iterations $i_\ell$ and $i_{\ell+2}$ with $1 \leq \ell \leq |I^2|-2$ this implies
\begin{align*}
r_0 - r_{\ell + 2} 
~\leq~ r_0 - \theta_{i_{\ell+2}}  
~\leq~ \tfrac{1}{2}(r_0- \theta_{i_{\ell+1}}) 
~\leq~ \tfrac{1}{2}(r_0 - r_\ell).
\end{align*}
%
Since $r_0-r_\ell = \sum_{a \in A}(\lambda_a^0 - \lambda_a^\ell) \tau_a = \sum_{a \in A}\mu_a \tau_a$ with $\mu_a \in
\{-2,-1,0,1,2\}$ for every~$a \in A$, we can apply Lemma~\ref{Lem:Radzik} to every second element in
$\{r_0-r_1,r_0-r_2,\dots,r_0-r_{|I^2|}\}$, which yields $|I^2|\in O(m\log m)$. 
\end{proof}


\subsection{Ring Families}
\label{subsec:ring}

In order to derive a bound on~$|I^3|$ below, we use the concept of ring families.
%
A \emph{ring family}~$\mathcal L\subseteq 2^W$ over a ground set~$W$ is a family of subsets closed under
taking unions and intersections. For an arbitrary family of subsets~$\cT\subseteq 2^W$, let~$\cR(\cT)$ be
the smallest ring family containing~$\cT$.
For our purposes,~$W$ is always the set of terminals~$\terminals$. The cardinality of~$W$ is therefore
denoted by~$k$.
The following bound on the length of a chain of ring families from~\cite{Goemans2017} is crucial for our
analysis.

\begin{theorem}[\cite{Goemans2017}]\label{Thm:Chain_of_ring_families}
For a finite set~$W$ with $k:=|W|$, consider a chain of ring families $\mathcal{L}_0 = \emptyset \neq \mathcal{L}_1 \subsetneq \mathcal{L}_2 \subsetneq \ldots \subsetneq \mathcal{L}_\lambda\subseteq 2^W$.
Then $\lambda \leq {k+1 \choose 2}+1$.
\end{theorem}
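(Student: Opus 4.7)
The plan is to exploit the well-known bijection between ring families on~$W$ containing~$\emptyset$ and preorders on~$W$. To such a ring family~$\mathcal{L}$ we associate, for each $w\in W$, the smallest set of~$\mathcal{L}$ containing~$w$, namely $\ell_\mathcal{L}(w):=\bigcap\{S\in\mathcal{L}:w\in S\}$ (with the convention $\ell_\mathcal{L}(w):=W$ if no such set exists). This induces the preorder $v\preceq_\mathcal{L} w\iff v\in\ell_\mathcal{L}(w)$. The family~$\mathcal{L}$ is then recovered as the collection of $\preceq_\mathcal{L}$-downward-closed sets, and any strict refinement $\mathcal{L}_i\subsetneq\mathcal{L}_{i+1}$ corresponds to a strict refinement of the preorder, i.e., a refinement of the partition into equivalence classes together with a possible removal of comparabilities among these classes.

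Next, let $p(\mathcal{L})$ denote the number of equivalence classes of~$\preceq_\mathcal{L}$ and $q(\mathcal{L})$ the number of unordered incomparable pairs of distinct classes. Define the potential
\[
\Psi(\mathcal{L})\;:=\;\mathbf{1}[\mathcal{L}\neq\emptyset]+\mathbf{1}[\emptyset\in\mathcal{L}]+(p(\mathcal{L})-1)+q(\mathcal{L}),
\]
with the convention that the last two terms vanish when $\mathcal{L}=\emptyset$. Then $\Psi(\emptyset)=0$, and the maximum value $\Psi(2^W)=1+1+(k-1)+\binom{k}{2}=\binom{k+1}{2}+1$ is attained at $\mathcal{L}=2^W$ (all singleton classes, all pairs incomparable).

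The crux is to prove that $\Psi$ strictly increases by at least~$1$ at each chain step. The transitions split into three cases: (i)~the initial step $\mathcal{L}_0=\emptyset\to\mathcal{L}_1\neq\emptyset$ increments the first indicator; (ii)~adding~$\emptyset$ to~$\mathcal{L}$ for the first time increments the second indicator; (iii)~any further refinement of a ring family already containing~$\emptyset$ corresponds to either splitting an equivalence class of $\preceq_\mathcal{L}$ (increasing $p$ by at least~$1$, and possibly $q$ as well) or removing a single comparability between two existing classes (increasing $q$ by~$1$).

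Combining strict monotonicity with the range bound $\Psi\in[0,\binom{k+1}{2}+1]$ immediately yields $\lambda\leq\binom{k+1}{2}+1$. The main obstacle is completing the case analysis in~(iii): one must verify that every minimal chain step of ring families containing~$\emptyset$ corresponds to exactly such an elementary preorder move (or a combination, which only helps), carefully accounting for side effects such as new incomparable pairs arising when a split class interacts with other classes, or the simultaneous enlargement of the support $\bigcup\mathcal{L}$ as new elements of~$W$ enter the preorder structure.
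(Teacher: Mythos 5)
Your potential~$\Psi$ does not strictly increase along the chain, and this is not a loose end but a genuine obstruction to the argument. The convention $\ell_\mathcal{L}(w):=W$ for elements $w$ lying in no set of~$\mathcal{L}$ makes the map $\mathcal{L}\mapsto\preceq_\mathcal{L}$ non-injective: it identifies a ring family with the one obtained by adjoining~$W$ (or, more generally, a new maximal set that merely collects previously uncovered elements at the top). Concretely, for $W=\{a,b\}$ consider the step $\mathcal{L}=\{\emptyset\}\subsetneq\mathcal{L}'=\{\emptyset,W\}$. Under your convention every~$w$ has $\ell_\mathcal{L}(w)=\ell_{\mathcal{L}'}(w)=W$, so both families induce the trivial preorder with a single equivalence class; thus $p=1$, $q=0$, both indicators agree, and $\Psi(\mathcal{L})=\Psi(\mathcal{L}')=2$. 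In particular, your claim that ``any strict refinement $\mathcal{L}_i\subsetneq\mathcal{L}_{i+1}$ corresponds to a strict refinement of the preorder'' is false, and the assertion that~$\mathcal{L}$ is recovered as the collection of $\preceq_\mathcal{L}$-downward-closed sets also fails: the downward-closed sets of $\preceq_\mathcal{L}$ always include $\emptyset$ and~$W$, which need not lie in~$\mathcal{L}$. (The same phenomenon occurs with $\{\{a\}\}\subsetneq\{\{a\},\{a,b\}\}$, so the problem is not limited to the very first step.)

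You do flag the ``enlargement of the support $\bigcup\mathcal{L}$'' at the end as something still to be handled, but it is precisely where the argument breaks, not a corner case of the case analysis in~(iii). Note that the obvious repair --- adding a term $\mathbf{1}[W\in\mathcal{L}]$ to~$\Psi$ --- gives $\Psi(2^W)=\binom{k+1}{2}+2$, overshooting the target by one, so the accounting must be rebalanced rather than patched (for instance by passing to an augmented ground set $W\cup\{\bot,\top\}$ on which $\emptyset$ and the full set are always present and recomputing the maximum, or by folding the top/bottom status into the class count instead of adding independent indicators). As written the proof does not establish the bound. For context: the paper only cites this theorem from Goemans, Gupta, and Jaillet and does not reprove it, so there is no in-paper argument to compare against; the fix above is the work that remains.
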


We will also make use of the following three, more technical results from~\cite{Goemans2017}. The first result bounds the value of a submodular function on a ring family~$\cR(\cT)$ under the assumption that it is non-positive on~$\cT$.

\begin{lemma}[\cite{Goemans2017}]\label{Lem:Value_on_ring_families}
Let $\cT \subseteq 2^W$ for some finite set $W$ with $k := |W|$ and let $f:2^W \to \R$ be a submodular function with minimum value $f_{\min} = \min_{S \subseteq W} f(S)$.
If $f(S) \leq 0$ for all~$S\in\cT$, 
then $f(S) \leq \frac{k^2}{4}|f_{\min}|$ for each~$S \in \cR(\cT)$.
\end{lemma}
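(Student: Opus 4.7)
The plan is to combine iterated applications of the submodular inequality
$$f(A \cup B) + f(A \cap B) \leq f(A) + f(B)$$
with a structural representation of each $S\in\mathcal{R}(\mathcal{T})$ as a short union of short intersections of members of $\mathcal{T}$. Throughout, I use that $f(\cdot)\geq\fmin$ everywhere and $\fmin\leq 0$, hence $-\fmin=|\fmin|\geq 0$.

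The first step is two elementary inductive bounds. For sets $T_1,\ldots,T_q\in\mathcal{T}$, I would show by induction on $q$ that
$$f\Bigl(\bigcap_{j=1}^{q} T_j\Bigr)\ \leq\ (q-1)|\fmin|,$$
using $f\bigl(\bigcap_{j\leq q}T_j\bigr)\leq f\bigl(\bigcap_{j\leq q-1}T_j\bigr)+f(T_q)-f\bigl(\bigcap_{j\leq q-1}T_j\cup T_q\bigr)$ together with $f(T_q)\leq 0$ and $-f(\cdot)\leq|\fmin|$. Symmetrically, for any sets $A_1,\ldots,A_p$,
$$f\Bigl(\bigcup_{i=1}^{p} A_i\Bigr)\ \leq\ \sum_{i=1}^{p} f(A_i) + (p-1)|\fmin|.$$

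The second step is the structural claim: every $S\in\mathcal{R}(\mathcal{T})$ admits a disjunctive normal form representation
$$S\ =\ \bigcup_{i=1}^{p}\bigcap_{j=1}^{q} T_{ij}, \qquad T_{ij}\in\mathcal{T},$$
with $p+q\leq k$. The proof uses that $\mathcal{R}(\mathcal{T})$ is a distributive sublattice of $2^W$, so by Birkhoff's representation theorem it is isomorphic to the lattice of order ideals of its poset $P$ of join-irreducibles. The key point is that $P$ embeds into $W$ (each join-irreducible has a unique atom distinguishing it from its immediate predecessors), so $|P|\leq k$. Expanding $S$ first as a union of join-irreducibles of $\mathcal{R}(\mathcal{T})$ (contributing the factor $p$) and then each join-irreducible as an intersection of elements of $\mathcal{T}$ that generate it (contributing the factor $q$), one balances the budget by observing that the atoms of $W$ partition between those needed to \emph{enumerate} the $p$ join-irreducibles in the union and those needed to \emph{carve out} a single join-irreducible as an intersection of depth $q$, giving $p+q\leq k$.

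Combining the two steps yields
$$f(S)\ \leq\ p(q-1)|\fmin|+(p-1)|\fmin|\ =\ (pq-1)|\fmin|,$$
and the AM--GM inequality applied to $p+q\leq k$ gives $pq\leq k^2/4$, so $f(S)\leq \tfrac{k^2}{4}|\fmin|$ as claimed. The main obstacle is the structural step: a naive enumeration of $\mathcal{R}(\mathcal{T})$ only bounds the number of DNF-terms by $2^k$, and obtaining the tight \emph{additive} budget $p+q\leq k$ really requires the Birkhoff correspondence together with a careful choice of a depth-balanced representation; once this structural fact is established, everything else is a routine submodular induction.
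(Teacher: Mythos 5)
The paper does not prove this lemma at all --- it is Lemma 3 of Goemans, Gupta, and Jaillet and is imported as a black box, so there is no in-paper proof to compare against. Your two inductive bounds
\[
f\Bigl(\bigcap_{j=1}^{q}T_j\Bigr)\le (q-1)|f_{\min}|\qquad\text{and}\qquad
f\Bigl(\bigcup_{i=1}^{p}A_i\Bigr)\le \sum_{i=1}^{p}f(A_i)+(p-1)|f_{\min}|
\]
are correct applications of submodularity, and the arithmetic that turns a representation $S=\bigcup_{i=1}^{p}\bigcap_{j=1}^{q}T_{ij}$ into $f(S)\le (pq-1)|f_{\min}|$, and then uses AM--GM on $p+q\le k$, is also fine. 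The entire weight of the proof therefore rests on the structural claim that every $S\in\cR(\cT)$ has a DNF representation with $p+q\le k$, and that is precisely where there is a genuine gap.

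Your justification --- that ``the atoms of $W$ partition between those needed to enumerate the $p$ join-irreducibles in the union and those needed to carve out a single join-irreducible as an intersection of depth $q$'' --- is not correct as stated, because the two groups of atoms overlap. Concretely, take $W=\{1,\dots,k\}$, let $\cT$ be all co-singletons $W\setminus\{j\}$, so $\cR(\cT)=2^W$ and the Birkhoff poset $P$ is the $k$-element antichain of singleton classes. For $S=\{1,\dots,p\}$ the maximal elements of the corresponding downset are $\{1\},\dots,\{p\}$, so the canonical union has $p$ terms, and carving out any single join-irreducible $\widehat{\{1\}}=\{1\}$ as an intersection of co-singletons requires $q=k-1$ sets; the $q$ ``carving'' removals include the other $p-1$ enumerating classes, so the atom sets do \emph{not} partition and $p+q=p+k-1>k$ for $p\ge 2$. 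The desired bound $p+q\le k$ can still be rescued here, but only by abandoning the canonical union-of-maximal-join-irreducibles and instead choosing $S=\bigcap_{j>p}(W\setminus\{j\})$ with $p'=1$, $q'=k-p$ --- a rebalancing of the DNF depth that your sketch neither performs nor argues is always possible. In short, the existence of a representation with $p+q\le k$ is plausible, and the rest of your argument would go through if it were established, but the ``partition of atoms'' reasoning does not establish it, and this is the crux of the whole lemma. You would need either a careful argument that one can always rebalance the DNF (e.g.\ by trading off union terms against intersection depth across the lattice, not just looking at one canonical decomposition), or a different structural bound such as $\sum_i q_i\le k^2/4+1$, to close the gap.
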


The second result bounds the value of a submodular function over an expansion of a ring family.

\begin{lemma}[\cite{Goemans2017}]\label{Lem:max_ring_family}
Let $f:2^W \rightarrow \R$ be a submodular function with minimum value $f_{\min} = \min_{S \subseteq W} f(S) \leq 0$.
Let $\cL$ be a ring family over $W$ and $T \not \in \cL$.
Define $\cL' := \cR(\cL \cup\{T\})$, $M := \max_{S \in \cL} f(S)$, and $M' := \max_{S \in \cL'} f(S)$.
Then, $M' \leq 2(M+|f_{\min}|) + f(T)$.
\end{lemma}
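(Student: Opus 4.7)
The plan is to reduce the claim to a small case analysis followed by at most two applications of submodularity. The first step is to establish a structure theorem for $\cL'$: every $S \in \cL'$ falls into one of the shapes (i) $S \in \cL$, (ii) $S = T$, (iii) $S = B \cup T$ with $B \in \cL$, (iv) $S = A \cap T$ with $A \in \cL$, or (v) $S = B \cup (A \cap T)$ with $A, B \in \cL$. This follows by writing the defining $\cup/\cap$-expression of $S$ in disjunctive normal form: each clause is an intersection of atoms from $\cL \cup \{T\}$, and since $\cL$ is closed under intersection, each clause equals $L$ or $L \cap T$ with $L \in \cL$. Gathering the two kinds of clauses and using $\cL$'s closure under union yields the listed shapes, modulo the degenerate cases where one of the groups of clauses is empty (which produce $T$, $B \cup T$, or $A \cap T$).

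The second step is to bound $f(S)$ in each case. Shape~(i) is immediate: $f(S) \le M$, which is below $2(M+|\fmin|)+f(T)$ since $M+|\fmin| \ge 0$ and $f(T) \ge \fmin$. For shapes~(ii)--(iv), a single application of submodularity suffices; for instance,
\[ f(A \cap T) \le f(A) + f(T) - f(A \cup T) \le M + f(T) + |\fmin|, \]
and $f(B \cup T)$ is bounded symmetrically, while shape~(ii) is trivial. For the main shape~(v), submodularity applied to the pair $B$ and $A \cap T$ yields
\[ f(B \cup (A \cap T)) \le f(B) + f(A \cap T) - f(B \cap A \cap T) \le M + f(A \cap T) + |\fmin|, \]
and substituting the bound on $f(A \cap T)$ from above produces exactly $2(M+|\fmin|) + f(T)$.

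The main obstacle I anticipate is purely the bookkeeping in the structure theorem: the disjunctive normal form breaks down if either group of clauses is empty, so the shapes involving $T$ but no $\cL$-atom on that side must be listed separately rather than subsumed into a single uniform expression. Once each shape is handled, the bound follows mechanically, and no properties of $\cL$ beyond closure under $\cup$ and $\cap$ are used.
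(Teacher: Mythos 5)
The paper does not prove this lemma; it cites it directly from Goemans, Gupta, and Jaillet~\cite{Goemans2017}, so there is no in-paper proof to compare against. Your argument is correct: the structural claim that every $S \in \cR(\cL\cup\{T\})$ has the form $B$, $T$, $B\cup T$, $A\cap T$, or $B\cup(A\cap T)$ with $A,B\in\cL$ follows from putting a $\cup/\cap$-expression into disjunctive normal form and using closure of $\cL$, and your two applications of submodularity (first to the pair $A,T$, then to the pair $B,\,A\cap T$) land exactly on the stated bound $2(M+|f_{\min}|)+f(T)$ in the worst case (shape~(v)); the remaining shapes are dominated by that bound since $M+|f_{\min}|\ge 0$ and $f(T)+|f_{\min}|\ge 0$. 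This is the same decomposition-plus-submodularity route taken in the cited source, so the proposal is not a genuinely different proof, just an independent reconstruction of the standard one; the only presentational point worth tightening is to state explicitly that $\cR(\cL\cup\{T\})$ coincides with the set of finite $\cup/\cap$-expressions over $\cL\cup\{T\}$ (the smallest closed family is exactly the expression closure), which is what licenses the DNF step.
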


The third result can be used to construct a chain of ring families.

\begin{lemma}[\cite{Goemans2017}]\label{lem:goe}
Let~$f:2^W\to\R$ be a submodular function with minimum value $f_{\min} =
\min_{S \subseteq W} f(S) \leq 0$. Consider a sequence of distinct sets $T_1,T_2,\dots,T_q$ such that
$f(T_1)=f_{\min}$, $f(T_2)>2|f_{\min}|$, and $f(T_i)\geq 4f(T_{i-1})$ for $3\leq i\leq q$. Then
$T_i\not\in\cR(\{T_1,\dots,T_{i-1}\})$ for all $1<i\leq q$. 
\end{lemma}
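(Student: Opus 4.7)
The plan is to prove the statement by induction on $i \geq 2$, carrying alongside an explicit quantitative bound on the ring-family maximum $M_j := \max_{S \in \cL_j} f(S)$, where $\cL_j := \cR(\{T_1,\dots,T_j\})$. Concretely, I maintain a bound of the form $M_j \leq a_j f(T_j) + b_j |f_{\min}|$ for $j \geq 2$, with $(a_2,b_2) = (1,0)$ and $(a_{j+1},b_{j+1}) = (a_j/2 + 1,\; 2b_j + 2)$; these recurrences have closed-form solutions $a_j = 2 - 2^{-(j-2)} < 2$ and $b_j = 2^{j-1}-2$.

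The base step $i=2$ is direct: $\cL_1 = \{T_1\}$ and $f(T_2) > 2|f_{\min}| \geq 0 \geq f(T_1)$ force $T_2 \neq T_1$, so $T_2 \notin \cL_1$. Lemma~\ref{Lem:max_ring_family} then gives $M_2 \leq 2(f_{\min}+|f_{\min}|) + f(T_2) = f(T_2)$, matching the claimed base.

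For the induction step, assume $T_i \notin \cL_{i-1}$ and the bound on $M_i$. I first prove $T_{i+1}\notin\cL_i$ by showing $M_i < f(T_{i+1})$: using $f(T_{i+1}) \geq 4 f(T_i)$ together with the iterated consequence $f(T_i) \geq 4^{i-2}f(T_2) > 2\cdot 4^{i-2}|f_{\min}|$ of the growth hypotheses, one checks that both summands $a_i f(T_i)$ and $b_i |f_{\min}|$ fall below $f(T_{i+1})/2$, so that $M_i < f(T_{i+1})$. This already forces $T_{i+1}\notin\cL_i$, since every element of $\cL_i$ satisfies $f(S)\leq M_i$. With $T_{i+1}\notin\cL_i$ in hand, Lemma~\ref{Lem:max_ring_family} becomes applicable and delivers $M_{i+1} \leq 2(M_i+|f_{\min}|) + f(T_{i+1})$; substituting the inductive bound on $M_i$ and using $f(T_i) \leq f(T_{i+1})/4$ to re-express $f(T_i)$-terms as $f(T_{i+1})$-terms produces the inductive bound on $M_{i+1}$ predicted by the recurrence.

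The principal obstacle I expect is the circular dependency baked into the argument: Lemma~\ref{Lem:max_ring_family} needs $T_{i+1}\notin\cL_i$ as a hypothesis, which is the very statement we wish to prove. The remedy is to strengthen the inductive claim by coupling it with the quantitative bound on $M_i$, so that this bound supplies precisely the slack needed to certify $T_{i+1}\notin\cL_i$ before the lemma is invoked. The induction then closes because the $4^i$ growth rate of $f(T_i)$ strictly dominates the $2^i$ growth rate of the accumulated $|f_{\min}|$-coefficient in the recursion.
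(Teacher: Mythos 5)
The paper does not actually prove Lemma~\ref{lem:goe}; it is imported verbatim with a citation to Goemans, Gupta, and Jaillet, so there is no in-paper proof to compare against. Judged on its own, your reconstruction is correct. You strengthen the induction to carry both $T_i\notin\cL_{i-1}$ and a quantitative bound $M_i\leq a_if(T_i)+b_i|f_{\min}|$, and the closing step works because the $4^i$ growth of $f(T_i)$ dominates the $2^i$ growth of $b_i$; I checked the recurrences, the closed forms $a_j=2-2^{-(j-2)}$, $b_j=2^{j-1}-2$, and the two halving estimates $a_if(T_i)<f(T_{i+1})/2$ and $b_i|f_{\min}|<f(T_{i+1})/2$, and they all go through (including the degenerate case $f_{\min}=0$, where the first estimate alone gives strictness). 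You also correctly flag and resolve the apparent circularity: Lemma~\ref{Lem:max_ring_family} requires $T_{i+1}\notin\cL_i$, which you first derive from $M_i<f(T_{i+1})$ before invoking the lemma.

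The coefficient tracking is heavier than necessary, though. The same argument closes with the single, non-exponential invariant $M_j\leq 2f(T_j)-2|f_{\min}|$ for $j\geq2$: the base is $M_2\leq f(T_2)<2f(T_2)-2|f_{\min}|$ (using $f(T_2)>2|f_{\min}|$), the step is
\begin{align*}
M_{j+1}~\leq~2(M_j+|f_{\min}|)+f(T_{j+1})~\leq~4f(T_j)-2|f_{\min}|+f(T_{j+1})~\leq~2f(T_{j+1})-2|f_{\min}|,
\end{align*}
and $T_{j+1}\notin\cL_j$ follows since $f(T_{j+1})\geq4f(T_j)>2f(T_j)\geq M_j+2|f_{\min}|\geq M_j$. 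This is essentially the invariant the paper does use in its \emph{application} of this circle of ideas, namely the induction $\alpha_t\leq 4f(S_t)$ inside the proof of Lemma~\ref{Lem:Cardinality_I_3}, and it avoids the closed-form solving of the linear recurrences. So: correct proof, slightly less streamlined than it needs to be.
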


In order to apply these three lemmas in our analysis, we need to first prove certain properties of the
subsets~$S_i$ found by Algorithm~\ref{alg_main}. The following is an adaptation
of~\cite[Lemma~4]{Goemans2017}.

\begin{lemma}\label{Lem:Int_new}
Let $[u,v]$ be an interval of consecutive iterations in $I^3$.
Then we have\quad (i)~$d^{\theta_{v}}(S_{v}) = d(\theta_{v})$,\quad (ii)~$d^{\theta_{v}}(S_{{v-1}}) > 0$,\quad (iii) $d^{\theta_{v}}(S_{{v-2}}) > |d(\theta_{v})|$,\quad (iv) $d^{\theta_{v}}(S_{{i-1}})>2d^{\theta_{v}}(S_{i})$ for $i \in [u+1, v-1]$.
\end{lemma}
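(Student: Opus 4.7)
The plan exploits one structural fact: since every iteration in $[u,v]$ lies in $I^3$, the interval $[\theta_u,\theta_v]$ contains no breakpoint of any function $\theta\mapsto d^\theta(S)$ for $S\subseteq S^+\cup S^-$, so each such function is affine on $[\theta_u,\theta_v]$. Writing $m_i$ for the slope of $\theta\mapsto d^\theta(S_i)$ on this interval, a short exchange argument (using that $S_{i-1}$ minimizes $d^{\theta_{i-1}}$ and $S_i$ minimizes $d^{\theta_i}$) yields the slope monotonicity $m_u\geq m_{u+1}\geq\cdots\geq m_v$, and affinity together with $d^{\theta'_i}(S_i)=0$ yields the identity $d^{\theta_v}(S_i)=m_i(\theta_v-\theta'_i)$ for every $i\in[u,v]$.

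Part (i) is immediate from Line~3 of Algorithm~\ref{alg_main}. For part (ii), strict positivity of $m_{v-1}$ (the curve $\theta\mapsto d^\theta(S_{v-1})$ rises from negative at $\theta_{v-1}$ to zero at $\theta'_{v-1}$ as a single affine piece) combined with $\theta_v>\theta'_{v-1}$ (which follows from $j_{v-1}\geq 1$) gives $d^{\theta_v}(S_{v-1})>0$ via the identity above. For (iii) and (iv) I combine two quantitative bounds. The distance-halving chain $\theta^*-\theta_v\leq 2^{-(v-j)}(\theta^*-\theta_j)$ follows by iterating Lemma~\ref{Lem:alg_main_distance}. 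The ``jump-not-maximal'' inequality $\theta_{j+1}-\theta'_j\geq\theta^*-\theta_{j+1}$ holds for $j\in I^3$ with $j_j>0$, since doubling $j_j$ would have produced a time at which $d(\theta)\geq 0$ (so $\geq\theta^*$), violating the condition $d(\theta)<0$ in the max over $J$ in Line~5. Combined with $\theta'_v\leq\theta^*$ (otherwise $d^{\theta^*}(S_v)<0$ contradicts feasibility of $\theta^*$), these inequalities give $\theta_v-\theta'_{v-2}\geq 3(\theta^*-\theta_v)\geq 3(\theta'_v-\theta_v)$; multiplying by $m_{v-2}\geq m_v$ and using $|d(\theta_v)|=m_v(\theta'_v-\theta_v)$ establishes (iii). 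For (iv), the analogous chaining at a generic internal index $i\in[u+1,v-1]$ produces $\theta_v-\theta'_{i-1}\geq 3(\theta^*-\theta_v)$, and a short case analysis on the sign of $\theta_v-\theta'_i$, combined with slope monotonicity $m_{i-1}\geq m_i$, establishes $m_{i-1}(\theta_v-\theta'_{i-1})>2m_i(\theta_v-\theta'_i)$, i.e.\ $d^{\theta_v}(S_{i-1})>2d^{\theta_v}(S_i)$.

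The main difficulty is the careful bookkeeping required to preserve the strict factor~$2$ in (iv): each input bound contributes only a loose factor~$2$, and they must be combined with slope monotonicity and the strict slack $\theta_v<\theta^*$ without losing strictness. A secondary issue is the degenerate sub-case $j_i=0$ for some $i\in I^3$, where the jump-not-maximal estimate is vacuous; there the distance-halving inequality alone, together with strict positivity of the slopes $m_i$ and distinctness of successive minimizers (Lemma~\ref{Lem:Prop_of_Alg_evac}(iii)), closes the remaining strict inequality.
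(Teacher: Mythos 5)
Your approach is genuinely different from the paper's, and mostly workable, but it has a gap around the assertion that $j_i>0$ inside $I^3$ blocks.

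The paper does not take the route you describe. Its first step is Lemma~\ref{Lem:prop_of_alg_main_I_3}, which uses the \emph{intermediate} minimizers $S_i'$ of $d^{\theta_i'}$ (not the minimizers $S_{i-1},S_i$ at consecutive iteration times) to prove simultaneously that the slopes are \emph{strictly} decreasing, $\cut^{\theta_i}(S_i)>\cut^{\theta_{i+1}}(S_{i+1})$, and that $j_i>0$ for every $i\in[u,v-1]$. Parts (ii)--(iv) then follow by combining this strict slope chain with the second part of Lemma~\ref{Lem:alg_main_distance}. You instead derive only the \emph{weak} slope inequality $m_{i-1}\geq m_i$ via a two-point exchange on $S_{i-1},S_i$, and recover the strict factor~$2$ from geometric slack ($\theta^*>\theta_v$, $\theta_i'>\theta_i$, and the ``jump-not-maximal'' bound, which is a rephrasing of the second part of Lemma~\ref{Lem:alg_main_distance}). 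That is a legitimately different decomposition of the work: the strictness is moved from the slope chain into the geometry.

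The gap is that your proof needs $j_{v-1}\geq 1$ for part (ii) --- indeed part (ii) is \emph{false} if $j_{v-1}=0$, since then $\theta_v=\theta_{v-1}'$ and $d^{\theta_v}(S_{v-1})=0$ --- and needs $j_{i-1}\geq 1$ for the jump-not-maximal step in (iii) and (iv). You assert $j_{v-1}\geq 1$ without justification, and describe the $j_i=0$ sub-case only vaguely. This is precisely the nontrivial content of Lemma~\ref{Lem:prop_of_alg_main_I_3}, and your weak exchange argument does not deliver it directly. The situation is recoverable: if $j_{i-1}=0$ then $\theta_i=\theta_{i-1}'$ gives $d^{\theta_i}(S_{i-1})=0>d^{\theta_i}(S_i)$, whence the exchange inequality becomes strict, $m_{i-1}>m_i$; on the other hand, the $j=0$ branch in the proof of Lemma~\ref{Lem:alg_main_distance} gives $\theta_i+\frac{-d(\theta_i)}{m_{i-1}}\geq\theta^*\geq\theta_i'=\theta_i+\frac{-d(\theta_i)}{m_i}$, hence $m_i\geq m_{i-1}$ --- a contradiction. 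But this derivation is essential, not a ``secondary issue,'' and it must be spelled out; as written, ``distance-halving together with distinctness of successive minimizers closes the remaining strict inequality'' does not constitute a proof. Once that is filled in, your argument goes through and is, if anything, slightly more elementary than the paper's since it never invokes the intermediate minimizers $S_i'$.
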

In order to prove Lemma~\ref{Lem:Int_new}, we need the following result on the slopes of curves found by Algorithm~\ref{alg_main} during iterations in~$I^3$.

\begin{lemma}\label{Lem:prop_of_alg_main_I_3}
Let $[u,v]=\{u,u+1,\ldots,v\} \subseteq I^3$ be an interval of consecutive iterations in $I^3$.
Then, $j_i>0$ and $\mincut^{\theta_{i}}(S_{i}) > \mincut^{\theta_{i+1}}(S_{i+1})$
for all $i \in [u,v-1]$.
\end{lemma}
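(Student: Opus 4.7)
The plan is to prove the second claim first and then use it to rule out $j_i=0$ by contradiction. Both parts rest on a single observation: since $i,i+1\in I^3$, no curve $\theta\mapsto d^\theta(S)$ has a breakpoint in $[\theta_i,\theta_{i+1}]\cup[\theta_{i+1},\theta_{i+2}]=[\theta_i,\theta_{i+2}]$, so each such curve is linear on this whole interval and left-hand derivatives at $\theta_i$, $\theta_{i+1}$, $\theta_{i+2}$ coincide with the local slopes. In particular, $\mincut^{\theta_i}(S_i)$ equals the (constant) slope of $\theta\mapsto d^\theta(S_i)$ on $[\theta_i,\theta_{i+1}]$ and $\mincut^{\theta_{i+1}}(S_{i+1})$ equals the (constant) slope of $\theta\mapsto d^\theta(S_{i+1})$ on the same interval.

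For the slope inequality $\mincut^{\theta_i}(S_i)>\mincut^{\theta_{i+1}}(S_{i+1})$, I would rewrite both quantities as difference quotients over $[\theta_i,\theta_{i+1}]$ and compare numerators. Two facts suffice. First, $S_i$ minimizes $d^{\theta_i}$, so $d^{\theta_i}(S_{i+1})\geq d^{\theta_i}(S_i)$. Second, $\theta_{i+1}\geq\theta_i'$ together with monotonicity gives $d^{\theta_{i+1}}(S_i)\geq d^{\theta_i'}(S_i)=0$, whereas $d^{\theta_{i+1}}(S_{i+1})=d(\theta_{i+1})<0$ because iteration $i+1$ is actually performed (it lies in $I^3\subseteq I$). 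This strict inequality $d^{\theta_{i+1}}(S_i)>d^{\theta_{i+1}}(S_{i+1})$, combined with the non-strict one, yields slope$(S_i)>$ slope$(S_{i+1})$ on $[\theta_i,\theta_{i+1}]$.

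For $j_i>0$, I would argue by contradiction: suppose $j_i=0$, so $\theta_{i+1}=\theta_i'$. Since the $j=1$ candidate in Line~5 of Algorithm~\ref{alg_main} failed to keep $d<0$, we have $\theta^*\leq\theta_i'+(-d(\theta_i'))/\mincut^{\theta_i'}(S_i)$. Linearity of $d^\theta(S_i)$ on $[\theta_i,\theta_{i+1}]$ with no breakpoint at $\theta_i'=\theta_{i+1}$ gives $\mincut^{\theta_i'}(S_i)=\mincut^{\theta_i}(S_i)$, hence
\[
\mincut^{\theta_i}(S_i)\;\leq\;\frac{-d(\theta_{i+1})}{\theta^*-\theta_{i+1}}.
\]
On the other hand, $\theta\mapsto d^\theta(S_{i+1})$ is linear on $[\theta_{i+1},\theta_{i+2}]$ with slope $\mincut^{\theta_{i+1}}(S_{i+1})$, and the construction of $\theta_{i+2}$ forces $\theta_{i+1}'\leq\theta_{i+2}$, so $\theta_{i+1}'=\theta_{i+1}+(-d(\theta_{i+1}))/\mincut^{\theta_{i+1}}(S_{i+1})$. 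Since $d^{\theta^*}(S_{i+1})\geq 0$ yields $\theta_{i+1}'\leq\theta^*$,
\[
\mincut^{\theta_{i+1}}(S_{i+1})\;\geq\;\frac{-d(\theta_{i+1})}{\theta^*-\theta_{i+1}}\;\geq\;\mincut^{\theta_i}(S_i),
\]
contradicting the strict inequality just established.

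The main technical obstacle I anticipate is the careful breakpoint bookkeeping: one must verify that no curve has a breakpoint at $\theta_i$, $\theta_{i+1}$, $\theta_i'$, or anywhere in $(\theta_{i+1},\theta_{i+1}']$, so that the various left-hand derivatives truly equal the local slopes and the Newton-style closed form for $\theta_{i+1}'$ is valid. All of this follows from $i,i+1\in I^3$ together with $\theta_{i+1}'\leq\theta_{i+2}$, but it is essential to trace it through explicitly.
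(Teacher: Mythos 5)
Your proof is correct, and it takes a genuinely different route from the paper's. The paper proves the slope inequality via an intermediate set~$S_i'$ (a minimizer of~$d^{\theta_i'}$), first showing $\mincut^{\theta_i}(S_i) > \mincut^{\theta_i'}(S_i')$ by comparing affine representations of the two curves over $[\theta_i,\theta_i']$, and then $\mincut^{\theta_i'}(S_i') \geq \mincut^{\theta_{i+1}}(S_{i+1})$ by a ``similar argument.'' You skip the intermediary and compare $S_i$ and $S_{i+1}$ directly by writing both slopes as difference quotients over the same interval $[\theta_i,\theta_{i+1}]$, using $d^{\theta_i}(S_{i+1}) \geq d^{\theta_i}(S_i)$ (optimality of $S_i$) and $d^{\theta_{i+1}}(S_i) \geq 0 > d^{\theta_{i+1}}(S_{i+1})$ (since $\theta_{i+1}\geq\theta_i'$ and $i+1\in I$). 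This is arguably cleaner, and it eliminates the need to re-run a ``similar argument'' for the second step. For $j_i>0$, your contradiction runs through $\theta_{i+1}'$ and the bound $\theta_{i+1}'\leq\theta_{i+2}$ coming from Line~5, which keeps the Newton closed form for $\theta_{i+1}'$ strictly inside the breakpoint-free window guaranteed by $i+1\in I^3$; the paper instead defines $\eta := \theta_i' + \tfrac{-d(\theta_i')}{\mincut^{\theta_i'}(S_i')}$ and invokes constancy of the slope on $[\theta_u,\theta_{v+1}]$ to conclude $d^\eta(S_i')=0$. Your variant makes the required ``no breakpoint up to the Newton point'' bookkeeping more explicit, which is exactly the obstacle you flagged at the end, and which the paper handles somewhat more tersely. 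Both approaches establish the same chain of inequalities contradicting the previously proven strict slope decrease.
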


\begin{proof}
Consider an arbitrary $i \in [u,v-1]$ and let $S'_i$ be a minimizer of $d(\theta_i')$. 
Note that $d^{\theta_i}(S_i')<0$.
Recall that $\mincut^{\theta'}(S)$ is equal to the left-hand derivative of $\theta \mapsto d^{\theta}(S)$ at time~$\theta'$ for~$S\subseteq\terminals$.
By definition of $I^3$, the functions $\theta \mapsto \mincut^{\theta}(S_i)$ and  $\theta \mapsto \mincut^{\theta}(S_i')$ are constant over $[\theta_u,\theta_{v+1}]$.
Thus,
\begin{align}\label{eq:prop_of_alg_main_I_3_1}
	0~=~d^{\theta_i'}(S_i)~=~d^{\theta_i}(S_i) + (\theta_i'- \theta_i) \cdot \mincut^{\theta_i}(S_i), 
\end{align}
and, since $d^{\theta_i}(S_i) \leq d^{\theta_i}(S_i')$, 
\begin{align}\label{eq:prop_of_alg_main_I_3_2}
	0~>~d^{\theta_i'}(S_i')~\geq~d^{\theta_i}(S_i) + (\theta_i' - \theta_i)\cdot \mincut^{\theta'_i}(S'_i).
\end{align}
Combining~\eqref{eq:prop_of_alg_main_I_3_1} and~\eqref{eq:prop_of_alg_main_I_3_2} yields that $\mincut^{\theta_{i}}(S_{i})>\mincut^{\theta'_{{i}}}(S_{{i}}')$.
By similar arguments we get~$\mincut^{\theta_{i}'}(S'_i)\geq\mincut^{\theta_{i+1}}(S_{i+1})$.

Assume by contradiction that $j_i = 0$.
Then we have  $\theta'_{i} + \frac{-d(\theta_i')}{\mincut^{\theta_i'}(S_i)}\geq\theta^*$, and the fact that $\theta \mapsto \mincut^{\theta}(S_i')$ is constant during $[\theta_u, \theta_{v+1}]$ implies $d^\eta(S_i') = 0$ with $\eta := \theta_i' + \frac{-d(\theta_i')}{\mincut^{\theta_i'}(S'_i)}$.
Since $\mincut^{\theta_i}(S_i) > \mincut^{\theta'_i}(S'_i)$, we get
\begin{align*}
\theta^* 
~\leq~ \theta_i' + \frac{-d(\theta_i')}{\mincut^{\theta_i'}(S_i)} 
~<~ \theta_i' + \frac{-d(\theta_i')}{\mincut^{\theta_i'}(S_i')} 
~=~ \eta  
~\leq~ \theta^*,
\end{align*}
a contradiction.
\end{proof}

\begin{proof}[Proof of Lemma~\ref{Lem:Int_new}]
%
We start with (iv):
Let $S_i'$ be a minimizer of $d^{\theta_i'}$ for all $i \in I$ and 	let $i \in [u+1,v-1]$.
Since the function $\theta \mapsto \mincut^{\theta}(S_{i-1})$ is constant for $\theta\in[\theta_u,\theta_{v+1}]$ and $d^{\theta_{i-1}'}(S_{i-1}) = 0$, we get 
\begin{align}\label{eq:Int_new}
d^{\theta_v}(S_{i-1}) 
~=~\mincut^{\theta_{i-1}}(S_{i-1})\cdot (\theta_{v} - \theta_{i-1}').
\end{align}
Since $j_{i-1} > 0$ by Lemma~\ref{Lem:prop_of_alg_main_I_3}, we can apply the second part of Lemma~\ref{Lem:alg_main_distance} and conclude that 
\[
\theta_i-\theta'_{i-1} 
~\geq~\tfrac12(\theta^*-\theta'_{i-1})
~\geq~\tfrac12(\theta_v-\theta'_{i-1}).
\]
This implies that
$\theta_v- \theta_{i}  \leq \frac12 (\theta_v-\theta_{i-1}')$, and with  Lemma~\ref{Lem:prop_of_alg_main_I_3} and~\eqref{eq:Int_new} we get
\begin{align*}
d^{\theta_v}(S_{i-1}) 
~=~\mincut^{\theta_{i-1}}(S_{i-1})\cdot (\theta_{v} - \theta_{i-1}')
~&>~2\mincut^{\theta_{i}}(S_i)\cdot (\theta_{v} - \theta_{i} )\\
&>~ 2\bigl(\mincut^{\theta_{i}}(S_{i})\cdot (\theta_{v} - \theta_{i} ) + d^{\theta_{i}}(S_{i})\bigr)
~=~ 2d^{\theta_{v}}(S_{i}),
\end{align*}
where the last inequality follows from~$d^{\theta_i}(S_i)<0$. This concludes the proof of~(iv).
Properties~(i) and~(ii) are clear.
By Lemma~\ref{Lem:prop_of_alg_main_I_3},
\begin{align*}
d^{\theta_{v}}(S_{v-2}) 
~&=~ \mincut^{\theta_{v-2}}(S_{v-2})\cdot (\theta_{v}-\theta_{v-1}) + d^{\theta{v-1}}(S_{v-2})\\
&>~ \mincut^{\theta_{v-1}}(S_{v-1})\cdot (\theta_{v}-\theta_{v-1})
~>~ |d^{\theta_{v-1}}(S_{v-1})| ~=~|d(\theta_{v-1})|
~\geq~ |d(\theta_v)|.
\end{align*}
This concludes the proof of~(iii).
\end{proof}

\subsection{Bounding the Number of Iterations in~$I^3$}
\label{subsec:I3}

With the help of the techniques from the previous section, we can finally prove a bound on~$|I^3|$. The
proof of the following lemma is inspired by~\cite[Proof of Theorem~6]{Goemans2017}.

\begin{lemma}\label{Lem:Cardinality_I_3}
$|I^3|~\in~O(k^2 \log k+m\log m\log k)$.
\end{lemma}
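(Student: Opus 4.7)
My plan is to adapt the blueprint of Goemans, Gupta, and Jaillet~\cite{Goemans2017} by combining, within each maximal run of consecutive iterations in $I^3$, a doubling/halving argument with global chain-of-ring-families bookkeeping. First, I partition $I^3$ into maximal intervals $[u_q, v_q]$, $q = 1,\ldots,Q$, of consecutive iterations. Since distinct maximal intervals are separated by at least one iteration in $I^1 \cup I^2$, Lemmas~\ref{Lem:Cardinality_I_1} and~\ref{Lem:Cardinality_I_2} yield $Q \leq |I^1|+|I^2|+1 \in O(k^2+m\log m)$. For a fixed interval $[u,v]$ I set $f := d^{\theta_v}$; by definition of $I^3$, all curves $\theta \mapsto d^\theta(S)$ are linear on $[\theta_u,\theta_{v+1}]$, so Lemma~\ref{Lem:Int_new} applies. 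Iterating parts~(iii) and~(iv) gives the lower bound $f(S_u) > 2^{v-u-2}\,|f_{\min}|$ with $f_{\min} = d(\theta_v) < 0$.

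I then pair this lower bound with an upper bound on $f(S_u)$ derived from a ring family $\cL_q$ generated by (carefully selected) sets from preceding intervals. If $S_u \in \cL_q$, then Lemma~\ref{Lem:Value_on_ring_families}, applied to the subfamily of $\cL_q$ on which $f \leq 0$, together with Lemma~\ref{Lem:max_ring_family} to absorb the remaining generators of $\cL_q$, yields $f(S_u) \leq (k^2/4)\,|f_{\min}|$; comparing with the doubling lower bound above gives $L_q := v-u+1 \in O(\log k)$. If instead $S_u \notin \cL_q$, then adding $S_u$ already enlarges the ring family, and moreover applying Lemma~\ref{lem:goe} to $f$ and the subsequence $T_1 := S_v$, $T_j := S_{v-2j+1}$ for $j \geq 2$ (whose hypotheses follow exactly from Lemma~\ref{Lem:Int_new}(iii),(iv)) shows that the interval contributes $\Omega(L_q)$ distinct new elements to the global chain of ring families. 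Summing across intervals, the iterations of $I^3$ thus split into two contributions: $O(\log k)$ per interval whenever the halving bound applies, totalling $O(Q\log k) \subseteq O((k^2+m\log m)\log k)$; and at most $\binom{k+1}{2}+1 \in O(k^2)$ iterations spent enlarging the global chain, by Theorem~\ref{Thm:Chain_of_ring_families}. Adding both contributions yields the claimed $|I^3| \in O(k^2\log k + m\log m\log k)$.

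The main technical obstacle is making this two-case dichotomy precise: because the local submodular function $f = d^{\theta_{v_q}}$ changes from one interval to the next while the ring families $\cL_q$ accumulate globally, careful use of Lemmas~\ref{Lem:Value_on_ring_families}, \ref{Lem:max_ring_family}, and~\ref{lem:goe} is needed to control how the maximum of each new $f$ over $\cL_q$ compares to $|f_{\min}|$, and to argue that sets added from interval $q$ are truly new with respect to the global chain rather than merely the local one. This amortized bookkeeping is exactly what the three ring-family lemmas of~\cite{Goemans2017} are designed for, but executing it here requires adapting them to the setting where the submodular function varies between intervals.
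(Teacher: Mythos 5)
Your high-level skeleton matches the paper: partition $I^3$ into maximal consecutive blocks (bounding the number of blocks via $|I^1|+|I^2|$), fix $f := d^{\theta_{v}}$ on each block, exploit the doubling from Lemma~\ref{Lem:Int_new}(iii),(iv) to get $f(S_u) > 2^{v-u-2}|f_{\min}|$, and charge long blocks to a global chain of ring families bounded by Theorem~\ref{Thm:Chain_of_ring_families}. The Case~1 branch of your dichotomy ($S_u$ already in the accumulated ring family $\Rightarrow$ $f(S_u)\leq\tfrac{k^2}{4}|f_{\min}|$ via Lemma~\ref{Lem:Value_on_ring_families} $\Rightarrow$ $L_q\in O(\log k)$) is sound.

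The gap is in Case~2. You invoke Lemma~\ref{lem:goe} on the subsequence $T_1=S_v$, $T_j=S_{v-2j+1}$, and then assert that this produces $\Omega(L_q)$ elements that are new \emph{relative to the global chain} $\cL_q$. But Lemma~\ref{lem:goe} only establishes $T_i\notin\cR(\{T_1,\dots,T_{i-1}\})$, a purely \emph{local} conclusion about the sets from the current block; it says nothing about membership in $\cR(\cL_q\cup\{T_1,\dots,T_{i-1}\})$. In particular, the sets $S_v, S_{v-1},\dots$ near the right end of the block have small $f$-values ($f(S_v)=f_{\min}$, $f(S_{v-2})$ only barely above $|f_{\min}|$), so they can easily lie inside $\cR(\cL_q)$ even when $S_u$ does not, and Lemma~\ref{lem:goe} cannot rule that out. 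The paper resolves this by \emph{truncating}: it only starts adding sets from $\ell := v_p - \lceil\log_2 k^2\rceil$ downwards, which guarantees $f(S_\ell) > \tfrac{k^2}{4}|f_{\min}|$, hence $S_\ell\notin\cR(\cT_{p-1})$ by Lemma~\ref{Lem:Value_on_ring_families}; it then runs the explicit induction $\alpha_t\leq 4f(S_t)$ via Lemma~\ref{Lem:max_ring_family} to show that every other $S_t$ for $t=\ell,\ell-2,\dots,u_p$ keeps escaping the accumulated ring closure. (Lemma~\ref{lem:goe} is used only for the \emph{first} block, where there is no accumulated $\cL_q$ to worry about.) Your closing paragraph acknowledges that "sets added from interval $q$ are truly new with respect to the global chain" still needs to be argued — that acknowledgement is precisely the missing step, and it is where the $\lceil\log_2 k^2\rceil$ truncation and the $\alpha_t\leq 4f(S_t)$ induction are indispensable. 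Without them the proposal does not close.

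Two smaller points: the paper does not branch on whether $S_u\in\cL_q$; it uniformly truncates the last $O(\log k)$ iterations of every block and adds the rest (the short-block case $v_p-u_p<\lceil\log_2 k^2\rceil$ then contributes nothing to the chain and $O(\log k)$ to $|I^3|$, which subsumes your Case~1). And the "every other index" spacing matters: it is what makes the $f(S_t) > 4f(S_{t+2})$ hypothesis available for both Lemma~\ref{lem:goe} (first block) and the $\alpha_t$ induction (later blocks), which in turn produces the factor $\tfrac12$ in the final count $\tfrac12|I^3|-O(\log k)\cdot q \leq O(k^2)$.
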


\begin{proof}
We partition~$I^3$ into maximal intervals (blocks) of consecutive iterations and order these blocks
decreasingly, i.e., $I^3 = \bigcup_{p = 1}^q[u_p, v_p]$ with $u_{p-1}> v_{p}+1$ for $1 < p
\leq q$.
Since every gap between two blocks contains at least one iteration in~$I^1\cup I^2$, we get~$q\leq
|I^1|+|I^2|$.
We construct a sequence of sets $T_1,T_2,\ldots$ such that no set in the sequence is in the ring closure
of its predecessors.

We start the construction with the sequence of sets~$\cT_1$ obtained by using Lemma~\ref{lem:goe} in
combination with Lemma~\ref{Lem:Int_new} applied to the first interval~$[u_1,v_1]$ with the submodular
function~$f=d^{\theta_{v_1}}$. That is, $\cT_1$ consists of the sets $S_{v_1}$ and then~$S_i$ for every other~$i \in [u_1,v_1-3]$.

We continue the construction of the sequence of sets inductively. 
Suppose that we have extracted a sequence of sets~$\cT_{p-1}$ from the intervals~$[u_s,v_s]$ for
$s<p$ such that no set is in the ring closure of its predecessors.
We show, that we can extend $\cT_{p-1}$ by adding every other set in $\bigl[u_p,\ldots,v_p - \lceil \log_2 k^2 \rceil\bigr]$, yielding the longer sequence~$\cT_p$.
To this end, consider the submodular function $f := d^{\eta}$ for time~$\eta := \theta_{v_p}$, and let
$f_{\min}<0$ be its minimum value.
By construction, $f(T)<0$ for all~$T\in \cT_{p-1}$.
Thus, Lemma~\ref{Lem:Value_on_ring_families} yields $f(S) \leq \frac{k^2}{4}|\fmin|$ for all $S \in \cR(\cT_{p-1})$.

The set~$S_\ell$ with~$\ell:=v_p - \lceil \log_2 k^2 \rceil$ is not contained in~$\cR(\cT_{p-1})$ and can
thus be safely added since $f(S_\ell) > \frac{k^2}{4}|\fmin|$ by Lemma~\ref{Lem:Int_new}.
For any $t=\ell-2,\ell-4,\dots,u_p(+1)$ set
\[
\cT_{p-1}^t ~:=~ \cT_{p-1} \cup \{S_t,S_{t+2},\dots,S_{\ell-2},S_\ell\}
\quad\text{and}\quad
\alpha_t ~:= \max_{S \in \cR(\cT_{p-1}^t)}f(S).
\]
We show by induction that $\alpha_t \leq 4f(S_t)$ for all $t=\ell,\ell-2,\ell-4,\dots,u_p(+1)$.
For $t = \ell$, we get with Lemma~\ref{Lem:max_ring_family} for the first inequality, and with $f(S_{\ell}) > \frac{k^2}{4}|\fmin|$ and $f(S_{\ell})>2|\fmin|$ by Lemma~\ref{Lem:Int_new} for the second inequality
\begin{align*}
\alpha_{\ell} 
~\leq~ 2\tfrac{k^2}{4}|\fmin| + 2|\fmin| + f(S_{\ell})
~<~ 2f(S_{\ell}) + f(S_{\ell}) + f(S_{\ell}) 
~=~ 4f(S_{\ell}).
\end{align*}
Assume that the claim is true for $t+2 \leq \ell$, i.e,
$\alpha_{t+2} \leq 4f(S_{t+2})$.
Then, again using Lemma~\ref{Lem:max_ring_family} for the first inequality and Lemma~\ref{Lem:Int_new} for the second inequality, we get
\begin{align*}
\alpha_t 
~\leq~ 2\bigl(4f(S_{t+2})+|\fmin|\bigr) + f(S_{t}) 
~<~ 2f(S_{t}) + 2|\fmin| + f(S_{t})
~<~ 2f(S_t) + f(S_t) + f(S_t) 
~=~4f(S_t).
\end{align*}
This concludes the induction. 
Since $f(S_{t}) > 4f(S_{t+2})$ by Lemma~\ref{Lem:Int_new}, $\alpha_{t+2} \leq 4f(S_{t+2})$ implies that $S_{t}$ is not contained in $\cR(\cT_{p-1}^t)$ and can thus be added to our sequence.

%
%
Summarizing, by Lemmas~\ref{Lem:Cardinality_I_1} and~\ref{Lem:Cardinality_I_2} we have constructed a chain of ring families of length 
\begin{align*}
\tfrac{1}{2}|I^3|-O(\log k)\cdot q 
~=~\tfrac{1}{2}|I^3|-O(\log k)\cdot(|I^1|+|I^2|)
~=~\tfrac{1}{2}|I^3|-O(k^2 \log k+m\log m\log k).
\end{align*}
By Theorem~\ref{Thm:Chain_of_ring_families}, the length of the chain is bounded
by~$O(k^2)$ and thus $|I^3|~\in~O(k^2 \log k+m\log m\log k) $.
\end{proof}

\subsection{Proofs of Theorem~\ref{Thm:Running_Time_Main} and Theorem~\ref{Thm:main}}
\label{subsec:theorem}

Now that we have derived bounds on the cardinalities of $I^1$, $I^2$, and $I^3$, we can prove Theorem~\ref{Thm:Running_Time_Main}.

\begin{proof}[Proof of Theorem~\ref{Thm:Running_Time_Main}]
The correctness of Algorithm~\ref{alg_main} has already been discussed at the very beginning of this
section.
It thus remains to prove the bound on the running time.
We start by analyzing the running time of one iteration of the while-loop.
Line~3 requires the minimization of the submodular function~$d^{\theta_i}$.
Using the submodular function minimization algorithm of Lee, Sidford, and Wong~\cite{Lee2015} with Orlin's min-cost flow algorithm to evaluate~$d^{\theta_i}$, this can be done in~$O\bigl(k^3 \log k \MCF\bigr)$ or $O\bigl(k^3 \log k \cdot m \log n (m+n\log\log n)\bigr)$ time.

Line 4 can be implemented in two steps: 
First, projecting the left-hand slope $\mincut^{\theta_i}(S_i)$ of $d^{\theta}(S_i)$ at time $\theta_i$
forward, we check whether $d^{\theta}(S_i)=0$ for~$\theta=\theta_i-
d^{\theta_i}(S_i)/\mincut^{\theta_i}(S_i)$.
If this happens to be the case, we have found $\theta_i'=\theta$.
The value $\mincut^{\theta_i}(S_i)$ can be determined in $O(\MCF)$ time by computing a static
minimum-cost circulation, and evaluating~$d$ takes $O\bigl(k^3 \log k \MCF\bigr)$ time, which adds up to
$O\bigl(k^3 \log k \MCF\bigr)$ in total.

Otherwise, we determine $\theta_i'$ by computing a quickest $s$-$t$-flow in the network $\dynN^S$.
Using the algorithm of Saho and Shigeno~\cite{Saho2017}, this takes $O(nm^2\log^2n)$ time.

Line 5 can be implemented with a binary search on~$J$ in~$O(\log\log k)$ iterations.
Each of these iterations is dominated by one submodular function minimization of~$d^\theta$ for
some~$\theta$.
Thus, Line 5 can be implemented to run in~$O\bigl(\log\log k(k^3 \log k \MCF)\bigr)$ time.

In the worst case, the algorithm of Saho and Shigeno needs to be used in Line 4 in each iteration~$i \in I^1 \cup I^2$.
Since $|I^1|+|I^2| \in O(k^2 + m\log m)$ by Lemmas~\ref{Lem:Cardinality_I_1} and~\ref{Lem:Cardinality_I_2}, the iterations in $I^1 \cup I^2$ require an overall worst case running time in $O\bigl((k^2 + m\log m)(nm^2\log^2n+\log\log k(k^3 \log k \cdot m \log n (m+n\log\log n))\bigr)$.

During an iteration~$i \in I^3$, the algorithm of Saho and Shigeno does not need to be used in Line 4 of the algorithm.
Thus, the running time of each iteration in $I^3$ is dominated by the running time of Line 5.
With $|I^3| \in O(k^2 \log k + m\log m \log k )$ by Lemma~\ref{Lem:Cardinality_I_3}, we achieve an overall running time for all iterations in~$I^3$ of $O\bigl((k^2 \log k + m\log m \log k)(\log\log k(k^3 \log k \cdot m \log n (m+n\log\log n)))\bigr)$ .

Overall, this adds up to a total running time of Algorithm~\ref{alg_main} in $\tilde{O}(k^5m^2+k^3m^3+m^3n)$.
\end{proof}
%
%
%
Our main result immediately follows.

\begin{proof}[Proof of Theorem~\ref{Thm:main}]
As proven above, the minimal feasible time horizon~$\theta^*$ can be determined with
Algorithm~\ref{alg_main} in $\tilde{O}(k^5m^2 + k^3m^3+m^3n)$ time.
Given~$\theta^*$, a feasible transshipment over time can be computed using a single call to 
Orlin’s submodular function minimization algorithm~\cite{Orlin2009} (or any other submodular function minimization algorithm relying on Cunningham’s framework) in time~$\tilde O(k^5 m^2)$)~\cite{Schloeter2017,schloter2018flows}.
Overall, this adds up to a running time in $ \tilde O(m^2k^5+m^3k^3+m^3n)$ to solve the Quickest Transshipment Problem.
\end{proof}


\paragraph{Acknowledgement.} The authors are much indebted to Tom McCormick for very interesting discussions and, in particular, for diligently emphasizing the promise of discrete Newton.
\pagebreak
\printbibliography
\end{document}